\newtheorem{theorem}{Theorem}
\newtheorem{lemma}[theorem]{Lemma}
\newtheorem{observation}{Observation}
\newtheorem{corollary}[theorem]{Corollary}
\newdefinition{rmk}{Remark}
\newproof{proof}{Proof}
\newdefinition{definition}{Definition}
\newdefinition{example}{Example}
\begin{document}

\begin{frontmatter}

\title{Task Assignment in Tree-Like Hierarchical Structures}

\author[ieu]{Cem Evrendilek}
\ead{cem.evrendilek@ieu.edu.tr}
\author[metu]{Ismail Hakki Toroslu\corref{cor}}
\ead{toroslu@ceng.metu.edu.tr}
\author[metu]{Sasan Hashemi}
\ead{sasan@ceng.metu.edu.tr}

\cortext[cor]{Corresponding author. Tel.: +90 312 210 5585}
\address[ieu]{\.{I}zmir University of Economics\\Computer Engineering Department\\35330 \.{I}zmir, Turkey}
\address[metu]{Middle East Technical University\\Computer Engineering Department\\06531 Ankara, Turkey}

\begin{abstract}
Most large organizations, such as corporations, are hierarchical organizations. In hierarchical organizations
each entity in the organization, except the root entity, is a sub-part of another entity.  In this paper we
study the task assignment problem to the entities of tree-like hierarchical organizations. The inherent
tree structure introduces an interesting and challenging constraint to the standard assignment problem.
When a task is assigned to an entity in a hierarchical organization, the whole entity, including its
sub-entities, is responsible from the execution of that particular task. In other words,
if an entity has been assigned to a task, neither its descendants nor its ancestors can be
assigned to a task. Sub-entities cannot be assigned as they have an ancestor already occupied.
Ancestor entities cannot be assigned since one of their sub-entities has already been employed
in an assignment. In the paper, we formally introduce this new version of the assignment problem
called Maximum Weight Tree Matching ($MWTM$), and show its NP-hardness. We also propose
an effective heuristic solution based on an iterative LP-relaxation to it.  
\end{abstract}

\begin{keyword}
task assignment, hierarchy constraints, NP-hardness, heuristic solution, integer linear programming,
linear programming relaxation
\end{keyword}

\end{frontmatter}

\section{Introduction}

In the standard assignment problem (or as sometimes referred to linear assignment problem) \cite{BDM09},
the number of tasks and the number of agents are equal, and a scalar value is used
to represent the cost/performance of assigning a task to an agent.  The objective of the assignment
problem is to determine an assignment such that each task is assigned to a different agent and the summation
of the costs/profits of the assignment is minimized/maximized. Many different variations of this problem
have already been studied including \emph{Generalized Assignment Problem} \cite{CKR06, FGMS06, G93, A95}.
In this work, we also investigate a new version of the standard assignment problem
which appears in real-life applications. 

In real-life, most of the large organizations such as corporations, governments, military etc.,
have hierarchical structures. Hierarchical organizations are nothing but trees where each
node corresponds to an entity in the organization, and entity sub-entity relationships are
represented as parent-child relationships.

In the standard assignment problem, agents are flat, and have got no structure
imposed on them one task is assigned to one agent. However, in Maximum
Weight Tree Matching ($MWTM$) problem, since agents are
organized as a tree, and sub-entities in the tree represent sub-parts of the agents,
an additional constraint, named hereafter as \emph{hierarchy constraint},
 is introduced to the assignment problem: When a task is assigned
to an agent, no other assignment can be made to its sub-entities, as they are assumed to
be a part of an agent already assigned. This constraint indirectly implies another
constraint. Since an agent should be assigned to a task as a whole along with its
sub-parts, when one of its sub-parts has already been assigned, then it cannot be
assigned itself to any task. In other words, if an agent is assigned to a task, none of
its ancestors in the tree can be assigned at all. In a more general term, on every
path from the root to a leaf in a tree, there could be at most a single assignment.
This, in turn, is easily seen to lead to the observation that the number of leaves
in the tree should be at least equal to the number of tasks to be executed.
Otherwise no feasible assignment exists.

A simpler version of $MWTM$ problem where each node has the same assignment weight
for all the tasks to be performed has been introduced in \cite{GT10pack}. It is
called \textquotedblleft tree like weighted set packing\textquotedblright \,in \cite{GT10pack}
since the set-subset relationships form a tree, and the weight assigned to each set
(or each node in the tree) can be interpreted as the weight of assigning a task to that
node. The same hierarchy (independence) constraint has been enforced to prevent
the selection of two sets having set-subset relationships (either directly or indirectly),
and finally the number of sets to be packed (selected) is given to maximize the total weight.
That problem effectively becomes a simpler version of the problem studied
in this paper, and an effective dynamic programming solution to it has been
developed in \cite{GT10pack}.

Although many different versions of assignment problems have been defined and explored,
there are only a very few problems remotely related to $MWTM$ problem reported in the
literature, such as \cite{WZS13}, and \cite{SRSP06}. Similar to $MWTM$, both of these
problems introduce different kinds of set constraints on the vertices of a bipartite graph,
and they have both been shown to be NP-hard. Therefore, heuristic solutions have been
proposed, namely a greedy heuristic for \cite{WZS13}, and a genetic algorithm based
solution for \cite{SRSP06}, and these solutions have been shown to be quite effective.

$MWTM$ problem has already been introduced in \cite{GT10}, and
a generic heuristic (genetic algorithm) has been used to solve it.
In \cite{GT10}, it has been shown that GA works quite effectively in terms of solution
quality for randomly generated inputs. Although the number of iterations were not
very large, due to the cost of each genetic operator among the chromosome populations,
each iteration takes a considerably long time to complete, and therefore we have
observed that the execution times are much higher to reach to the level of near-optimal
results obtained with the approach proposed in this paper. Since GA approach uses a
generic heuristic (slightly customized for the problem), it is actually not fair to compare
it with our problem-specific heuristic, which is much more effective.
Moreover, although GA approach has been applied to different sized inputs,
significant input parameters have not been explored in its evaluation
in \cite{GT10} corresponding to the structure and the distributions of the weights.
That is why we have compared the quality of the solutions of our heuristic proposed
in this paper with that of ILP only which produces the optimal (whenever possible). 
This paper has the following additional contributions to \cite{GT10}:
\begin{itemize}
\item{The problem is shown to be NP-hard,}
\item{An Integer Linear Programming (ILP) model of the problem is given,}
\item{An iterative Linear Programming (LP) relaxation solution is developed,}
\item{The effectiveness of the proposed iterative LP-relaxation solution is verified through extensive tests.}
\end{itemize}

Iterative LP-relaxation or rounding algorithms have previously been used.
A factor $2$ approximation algorithm is presented in \cite{J01} for finding a minimum-cost subgraph
having at least a specified number of edges in each cut. This class of problems defined
in \cite{J01} includes the generalized Steiner network problem also known as the survivable
network design problem. The algorithm in \cite{J01} first solves the linear relaxation of
ILP formulation of the problem, and then iteratively rounds off the solution. The approach
taken in \cite{J01} has been generalized and formalized in \cite{JThesis00}.
In order to exploit the full power of LP, a new technique called iterative rounding has been
introduced in \cite{JThesis00}. Iterative rounding is used in \cite{JThesis00} to
iteratively recompute the best fractional solution while maintaining the
rounding of the previous phases. Although an iterative rounding based heuristic
solution is developed in this paper for $MWTM$, the presence of the hierarchy
constraint does not simply lend itself to the consideration of fractional values
from the highest to the smallest.

The rest of the paper has been organized as follows. The next section formally introduces the problem,
and proves its NP-hardness. Section~\ref{sec:ILP}, describes a mathematical (integer linear programming)
formulation, and Section~\ref{sec:bottom-up} presents how its relaxation to LP can be iteratively used as an
effective heuristic. Section~\ref{sec:experiments} describes the experiments and their results.
Finally, the last section presents concluding remarks.

\section{Problem Description and its NP-Hardness}\label{sec:np-hardness}

We will now introduce \textit{Maximum Weight Tree Matching} ($MWTM$) problem formally.
\begin{definition}\label{def:MWTM}
A tree $T$ with $n$ nodes rooted at a node $r$, and a separate set of $m$ tasks are given.
Associated with each node $i$ in $T$ is a real valued function $w_{i,j}$ denoting the
weight of assigning node $i$ to task $j$ for all $i \in \{1..n\}$ and $j \in \{1..m\}$
The problem of finding an assignment of all tasks to nodes in $T$ with the maximum
total weight in such a way that the assignment between nodes and tasks forms
a matching, and no node assigned to a task is allowed to have any ancestors (or descendants)
which have also been assigned to a task is named \textit{MWTM}.
\end{definition}

It should be noted that the requirement for the weight function to
be defined for all combinations of nodes and tasks in $MWTM$ stems
from a deliberate decision. $MWTM$ is more restricted than its possible
variants where some combinations of nodes and tasks can be forbidden.
As $MWTM$ can be reduced directly to these more general forms,
NP-hardness of them would easily follow once $MWTM$ is shown to be NP-hard.

\begin{figure}[tbh]
	\centering
	\includegraphics[width=4.8in]{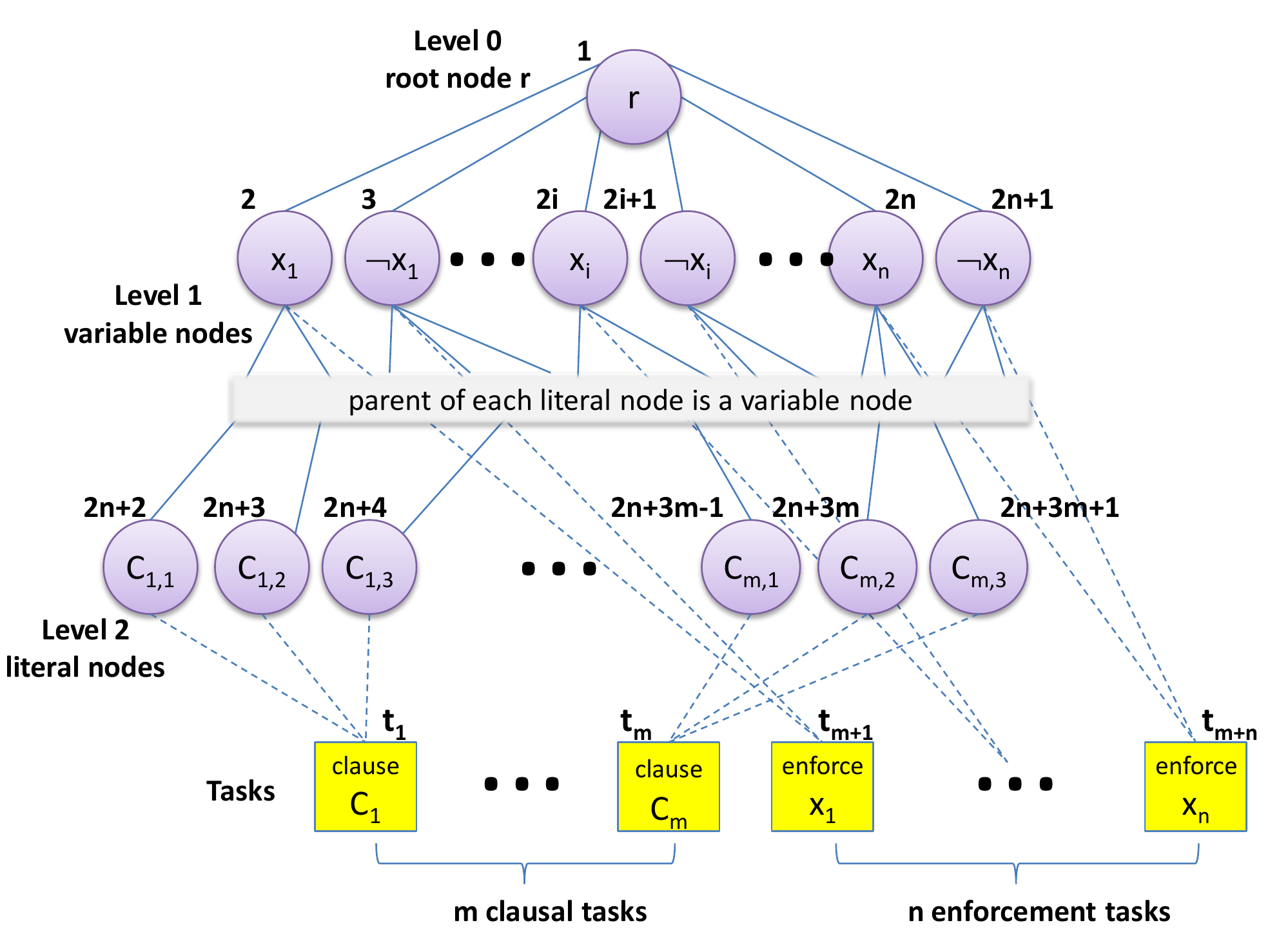}
	\caption{Transforming an E3-SAT instance to the corresponding instance of $MWTM$.
The solid lines between the nodes are the tree edges while the dashed lines between
the nodes and the tasks correspond to the weight function $w_{i,j}$.}
	\label{fig:reduction}
\end{figure}

The constraint associated with the hierarchical structure of the tree dictates that no
two nodes on the same path from the root $r$ to a leaf node in $T$ can ever be simultaneously
assigned in a solution to an instance of $MWTM$.
\begin{definition}\label{def:independentPath}
Two paths in a tree from the root to any two distinct nodes are said to be
\textit{independent paths} if and only if none of the two paths is a subset of the other.
\end{definition}
In the light of this definition, the hierarchy constraint can simply be restated
as the requirement that the paths from the assigned nodes to the root
are all pairwise independent.

$MWTM$ can be shown to be NP-hard by a polynomial time reduction from
E3-SAT which is a variant of 3-satisfiability (3-SAT) problem.
E3-SAT (resp. 3-SAT) is defined to be the problem of deciding whether a satisfying truth
assignment is possible for the variables of a given Boolean formula in
Conjunctive Normal Form (CNF) where each clause is a disjunction of
exactly (resp. at most) three literals each of which is either a variable or its negation.
3-SAT is one of Karp's $21$ NP-complete problems \cite{K72}. Any given instance
of 3-SAT can be easily transformed to a corresponding instance of E3-SAT by introducing
three new dummy variables, $d_1$, $d_2$, and $d_3$. While only $d_1$ is inserted
into the clauses with one literal, both $d_1$ and $d_2$ are inserted into the clauses
with two literals. In order to make sure in any satisfying assignment that the dummy
variables can only be set to false, the conjunction of all maxterms of the dummy
variables except $d_1 + d_2 + d_3$ are finally appended to the clauses each with
exactly three literals now. The NP-completeness of E3-SAT is hence confirmed.

A given instance of E3-SAT problem is transformed to a corresponding instance
of $MWTM$ in time polynomial in the size of the input Boolean expression. Let a given
instance of E3-SAT have $n$ variables denoted by $x_i$ where $i \in [1..n]$ and a
3-CNF formula $C_1 \wedge C_2 \wedge \ldots \wedge C_m$ where each $C_i$ represented
by $C_{i,1} \vee C_{i, 2} \vee C_{i, 3}$ is a disjunction of three literals corresponding
to either a variable or its negation. The transformation starts by introducing the root
node designated by $r$ to the initially empty tree $T$ of the corresponding $MWTM$
instance at level $0$. The root node $r$ is numbered as $1$.
For each variable $x_i$, two child nodes to root r are then created numbered $2i$
for $x_i$, and $2i+1$ for $\neg x_i$ corresponding to assigning $true$ and $false$
respectively to this variable. The parents of all such nodes are set to point
to node $r$. As there are $n$ distinct variables in the given E3-SAT instance,
the root $r$ of $T$ in the corresponding $MWTM$ instance becomes populated
with a total of $2n$ children at level 1 of $T$ after this step. These are called
\textit{variable nodes} (see Figure~\ref{fig:reduction}).
In the final step of the construction of $T$, for each literal $C_{i, j}$ where $i \in [1..m]$, and
$j \in [1..3]$, a node numbered $1+2n+3(i-1)+j$ is created. The parent of such
a node is set to $2k$ if $C_{i, j} = x_k$, and to $2k+1$ otherwise if $C_{i, j} = \neg x_k$
where $k \in [1..n]$. What this step achieves in effect for each node corresponding to
assigning $true$ to $x_k$ or to its negation $\neg x_k$ at level $1$ is the creation
of as many children at the next level $2$ under the relevant variable node as there are
occurrences of the corresponding variable in the clauses of the given E3-SAT instance.
The tree $T$ constructed is shown in Figure~\ref{fig:reduction}.
While parent-child relationships are indicated by solid lines in this figure,
dashed lines depict the weight function $w_{i, j}$. It should be noted that
the variable nodes at level $1$ will have as many children as there are
occurrences of the corresponding literal at level $2$ which is implied by
the existence of multiple edges emanating from a variable node while the nodes
corresponding to literals in clauses at level $2$ will have a single edge to
their parent as shown in the figure. The nodes at level $2$ are accordingly
called \textit{literal nodes}.

Once we obtain the tree $T$ in $MWTM$ instance corresponding to the given instance of
E3-SAT, we also set the number of tasks to $m + n$. Each task $t_i$ for $i \in [1..m]$
corresponds to satisfying a clause $C_i$. We call these \textit{clausal tasks}. Each task
$t_i$ for $i \in [m+1..m+n]$ among the rest of the tasks , however, are used to enforce
that the corresponding variable $x_{i-m}$ is set to either one of $true$ or $false$
consistently over all clauses. We call such tasks \textit{enforcement tasks}.

Apparently, the total number of nodes in $T$ in the corresponding instance of $MWTM$ is
given by $1 + 2n + 3m$ where $n$ and $m$ are the number of variables and clauses
respectively specified in the given E3-SAT instance. The number of tasks, on the other hand,
is $n + m$. The concluding step of the transformation is to appropriately set the corresponding
values $w_{i, j}$ for all nodes $i \in [1..1+2n+3m]$ in $T$, and all tasks $j \in [1..m+n]$
as shown in Equation~\ref{eqn:weight} below:

\begin{equation}\label{eqn:weight}
w_{i, j}= 
	\left\{ 
	\begin{array}{rl} 
		0, & \text{if } i=1 \wedge j \in [1..m+n] \\
		1, & \text{if } i \in [2..2n+1] \wedge j = m + \lfloor \frac{i}{2} \rfloor \\
		1, & \text{if } i \in [2n+2..2n+3m+1] \wedge j = \lfloor \frac{i-2n-2}{3} \rfloor + 1 \\
		0, & \text{otherwise }
	\end{array}	
	\right.
\end{equation}
The weights of carrying out any one task by the root node are all initialized to zero.
For a variable node $i \in [2..2n+1]$ at level $1$ corresponding to
$x_{\lfloor \frac{i}{2} \rfloor}$ or $\neg x_{\lfloor \frac{i}{2} \rfloor}$
depending on whether $i$ is even or odd respectively, however, the weights of executing tasks are set
in such a way that a consistent assignment of truth values to individual variables can be enforced.
The only task whose execution by node $i$ can have a positive contribution to the solution is
therefore the corresponding enforcement task $t_{m + \lfloor \frac{i}{2} \rfloor}$. At level $2$
are the literal nodes ranging from $2n+2$ to $1+2n+3m$ corresponding to the literals in the
clauses of the given E3-SAT instance. Since each literal can accordingly be set to satisfy a clause
in which it occurs, the weight $w_{i, j}$ of assigning a level $2$ node $i$ representing a literal
$C_{p, q}$ to clausal task $t_j$ corresponding to the clause $C_p$ itself is appropriately set to $1$
to reflect a feasible assignment. Therefore, the equalities $i=1+2n+3(p-1)+q$, and
$j=p$ must hold. Noting that $q$ can only take on the values $1$ through $3$
inclusive readily gives $p = \lfloor \frac{i-2n-2}{3} \rfloor + 1$, and $q = (i-2n-2) \mod 3 + 1$.
All other combinations of nodes and tasks have weight $0$.

It should be pointed out that an $MWTM$ instance so constructed would always lend
itself to feasible solutions since the number of leaf nodes in $T$ is greater than or equal
to the number of tasks. This last inequality can be seen to hold by noting that $m \ge (2n-t)/3$
where $t \in [0..n)$ denotes the number variable nodes without any children in $T$ based
on the assumption that at least one of a variable or its negation is used in one of $m$ clauses
in the given E3-SAT instance. As $m$, $n$, and $t$ are all non-negative,
$m \ge (2n-t)/3= \frac{2}{3}n-\frac{1}{3}t \ge \frac{1}{2}n-\frac{1}{3}t \ge \frac{1}{2}n-\frac{1}{2}t$
is easily obtained. Multiplying both sides of $m \ge \frac{1}{2}n-\frac{1}{2}t$ by two, and then adding
$m$ to both sides, we obtain $3m \ge m+n-t$, and then $3m+t \ge m+n$ by rearranging.
The feasibility of the corresponding $MWTM$ instances obtained through the transformation
described are hence confirmed.

Given the transformation described, we make the following straightforward observation
to be used in a lemma to follow.

\begin{observation}\label{obs:varOrNegation} In any solution with total weight $n+m$ to the corresponding $MWTM$
instance obtained from a given E3-SAT instance through the transformation described, a literal
node at level $2$ can be assigned to a related clausal task iff no other literal node corresponding
to its negation at the same level has already been allocated.
\end{observation}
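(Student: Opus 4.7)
The plan is to exploit two facts about the constructed instance: the weight function in Equation~\ref{eqn:weight} is Boolean, and there are exactly $n+m$ tasks. First I would argue that in any matching of total weight $n+m$, every one of the $n+m$ tasks must be paired with a node on which it has weight exactly~$1$; otherwise, since each task contributes at most $1$, the total could not reach $n+m$. Reading off Equation~\ref{eqn:weight}, the weight-$1$ partners are sharply restricted: for an enforcement task $t_{m+i}$ the only candidates are the two variable nodes $2i$ and $2i+1$ (representing $x_i$ and $\neg x_i$ respectively), and for a clausal task $t_p$ the only candidates are the level-$2$ literal nodes whose literals occur in $C_p$.

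For the forward direction, suppose a level-$2$ literal node $u$ representing $x_i$ (in some clause $C_p$) is matched to its clausal task $t_p$. Since the parent of $u$ is the variable node $2i$, the hierarchy constraint forbids $2i$ from being in the matching. The enforcement task $t_{m+i}$ therefore has only its remaining weight-$1$ candidate $2i+1$ available, and must be matched to it. But then every descendant of $2i+1$, which is precisely the set of level-$2$ literal nodes representing $\neg x_i$, is blocked by the hierarchy constraint and cannot appear in the matching. The reverse direction is the symmetric forward argument applied to a hypothetical matched $\neg x_i$-literal, whose contrapositive recovers the desired implication.

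The main obstacle is making the pigeonhole step airtight, since the matching framework could in principle permit a node to be paired with a task on which it has weight~$0$. Two degenerate configurations must be ruled out: matching the root $r$ to any task, which would propagate through the hierarchy constraint to block every other node and collapse the total weight to~$0$; and matching any task to a weight-$0$ node, which would reduce its contribution to $0$ and prevent the total from reaching $n+m$. Once these cases are excluded, the sibling structure of $\{2i, 2i+1\}$ together with the level-$1$-to-level-$2$ parent-child relationships forces the chain of deductions in the previous paragraph, and the observation follows.
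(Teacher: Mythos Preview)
Your proposal is correct and follows essentially the same approach as the paper: both argue that a total weight of $n+m$ forces every enforcement task to land on one of its two variable nodes, after which the hierarchy constraint propagates to exclude all literal nodes under the occupied variable node. Your version is simply more explicit---you spell out the pigeonhole step and the exclusion of the root and weight-$0$ matches, which the paper's two-sentence proof leaves implicit.
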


\begin{proof}
In any solution with total weight $n+m$ to the corresponding $MWTM$ instance after the transformation
depicted, all the enforcement tasks should have already been assigned to variable nodes at level $1$ in $T$.
Only the children at level $2$ of the unassigned variable nodes at level $1$ now, by the definition of
$MWTM$, can be used to fulfill the tasks corresponding to the clauses which ensure a consistent
assignment.
\qed
\end{proof}

The following lemma can now be proved easily.

\begin{lemma}\label{lem:iff} A given E3-SAT instance with $n$ variables, and $m$ clauses is satisfiable
iff the corresponding $MWTM$ instance obtained through the transformation described above has a
solution with total weight $n+m$.
\end{lemma}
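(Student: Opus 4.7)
The plan is to prove both directions of the biconditional by an explicit construction and a counting argument, relying on the weight function in Equation~\ref{eqn:weight} together with Observation~\ref{obs:varOrNegation}.

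For the forward direction ($\Rightarrow$), suppose a satisfying truth assignment $\sigma$ is given for the E3-SAT instance. I would build an $MWTM$ solution as follows. For each variable $x_i$, assign variable node $2i$ to the enforcement task $t_{m+i}$ when $\sigma(x_i) = \text{false}$, and assign variable node $2i+1$ when $\sigma(x_i) = \text{true}$; by Equation~\ref{eqn:weight} this contributes exactly $n$ to the total weight. Next, for each clause $C_p$, since $\sigma$ satisfies it, at least one literal $C_{p,q}$ evaluates to true; its corresponding literal node $1+2n+3(p-1)+q$ is a child of a variable node that was \emph{not} picked in the previous step, precisely because the variable-node assignment locks the opposite-polarity subtree. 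Assigning this literal node to $t_p$ is therefore compatible with the hierarchy constraint and contributes a further $m$, giving total weight $n+m$.

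For the backward direction ($\Leftarrow$), observe that Equation~\ref{eqn:weight} bounds the contribution of every task to at most $1$, so achieving weight $n+m$ across exactly $n+m$ tasks forces every task to be matched with weight $1$. For each enforcement task $t_{m+i}$, only the nodes $2i$ and $2i+1$ attain weight $1$, so exactly one of them is chosen; for each clausal task $t_p$, only the three literal nodes belonging to $C_p$ attain weight $1$, so exactly one of them is chosen. By the hierarchy constraint (cf.\ Observation~\ref{obs:varOrNegation}), the literal node picked for $t_p$ must lie under whichever of the two variable nodes was \emph{not} picked for the corresponding enforcement task. I would then extract $\sigma$ by setting $\sigma(x_i) := \text{true}$ exactly when node $2i+1$ was chosen for $t_{m+i}$, and $\sigma(x_i) := \text{false}$ otherwise. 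For every clause $C_p$, the literal $C_{p,q}$ whose node is assigned to $t_p$ then evaluates to true under $\sigma$, because the parent of this literal node (which was left unassigned) corresponds to the opposite polarity of $C_{p,q}$. Hence every clause is satisfied.

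The main obstacle is keeping the sign convention straight: assigning the variable node labeled $x_i$ (node $2i$) actually forces $\sigma(x_i) = \text{false}$ in the extracted truth assignment, since the hierarchy constraint then blocks exactly the positive occurrences of $x_i$ from being used to satisfy clauses. Once this duality is pinned down, both directions reduce to routine verification, with the backward direction relying crucially on the established observation to rule out any ancestor--descendant collision between a chosen variable node and a chosen literal node.
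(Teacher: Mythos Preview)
Your proof is correct and mirrors the paper's approach: in the forward direction you assign the opposite-polarity variable node to each enforcement task and then pick a satisfying literal node for each clausal task, and in the backward direction you read off $\sigma$ from the enforcement assignments and invoke Observation~\ref{obs:varOrNegation}, supplying a bit more of the counting detail than the paper does. The only slip is expository: the parent of the literal node for $C_{p,q}$ has the \emph{same} polarity as $C_{p,q}$, not the opposite; what makes the literal true under $\sigma$ is that the \emph{sibling} variable node (of opposite polarity) received the enforcement task, exactly as your final paragraph correctly spells out.
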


\begin{proof}
Let us first prove the sufficiency part:  If a given E3-SAT instance is satisfiable then there exists an
assignment of truth values to all $n$ variables which makes all $m$ clauses evaluate to true.
This, in turn, implies that at least one literal in every clause can be made true. The corresponding
$MWTM$ instance is then easily seen to have an optimal assignment with weight $n+m$:
Each task corresponding to a clause in this scheme is assigned to one node at level $2$
corresponding to one of the literals satisfying this clause while each enforcement task is
assigned to the node at level $1$ representing the negation of the literal evaluating to
true in a satisfying truth assignment to the given E3-SAT instance.
This is indeed a matching since each task is matched to a different node and no node which
is a parent of an already assigned literal node is assigned to a task. The latter is guaranteed by
the fact that if a literal node is assigned to a clausal task, then the node corresponding to
the negation of this literal at a higher level can only be used to accomplish the respective
enforcement task. The total weight is also the maximum possible as no weight value can
be greater than $1$.

In order to prove the necessity part, let us assume that there exists a solution with total
weight $n+m$ to the corresponding $MWTM$ instance. A truth assignment for the
given E3-SAT instance can be obtained by setting each variable $x_i$ to true if the
corresponding enforcement task $m+i$ is assigned to node $2i+1$, and to false if
the assignment is to node $2i$. This truth assignment definitely satisfies 3-CNF
expression of the given E3-SAT instance by Observation~\ref{obs:varOrNegation} above.
\qed
\end{proof}

To illustrate the idea in the reduction process, let us consider the following
example.

\begin{example}
A 3-CNF formula $(p \vee \neg q \vee \neg p) \wedge (p \vee r \vee \neg s) \wedge (q \vee r \vee s)$
with $4$ variables, and $3$ clauses is given. While the variables are named $p$, $q$, $r$, and $s$,
the clauses are denoted by $C_1 = (p \vee \neg q \vee \neg p)$, $C_2 = (p \vee r \vee \neg s)$,
and $C_3 = (q \vee r \vee s)$. The corresponding tree structure obtained through
the transformation just described is given in Figure~\ref{fig:reductionExample},
while the accompanying weights of assigning nodes to tasks are shown
in Figure~\ref{fig:reductionExampleWeight}. While the nodes are numbered from
$1$ through $18$, tasks are called $t_{C1}$, $t_{C2}$, and $t_{C3}$
corresponding to the clausal tasks, and $t_p$, $t_q$, $t_r$, and $t_s$
corresponding to the enforcement tasks.

\begin{figure}[tbh]
	\centering
	\includegraphics[width=4.4in]{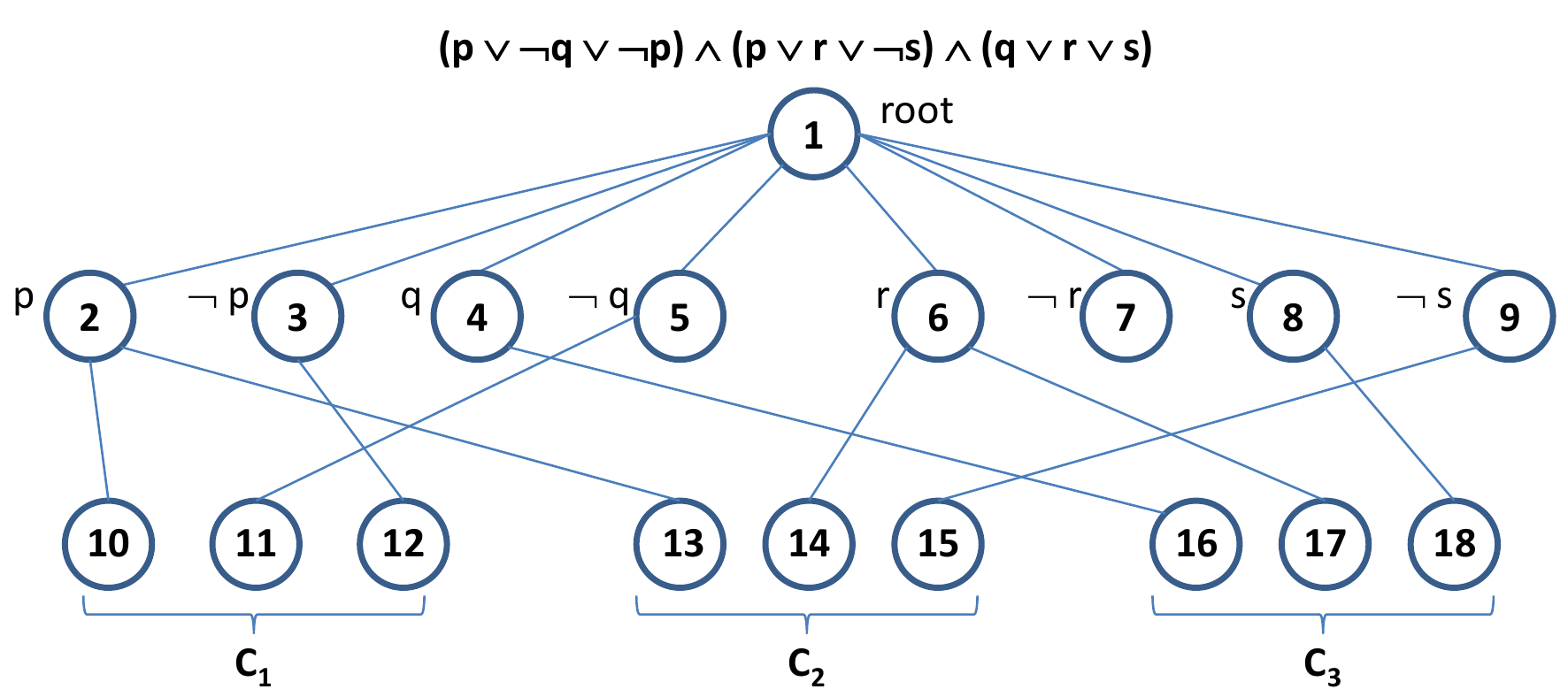}
	\caption{The tree for the corresponding $MWTM$ instance obtained from the example E3-SAT instance.}
	\label{fig:reductionExample}
\end{figure}

The given 3-CNF Boolean expression
is satisfiable if and only if $MWTM$ instance identified with the corresponding tree
structure given in Figure~\ref{fig:reductionExample}, and accompanying weight
function depicted in Figure~\ref{fig:reductionExampleWeight} has an assignment
with a total weight of $3+4=7$.

\begin{figure}[tbh]
	\centering
	\includegraphics[width=4.4in]{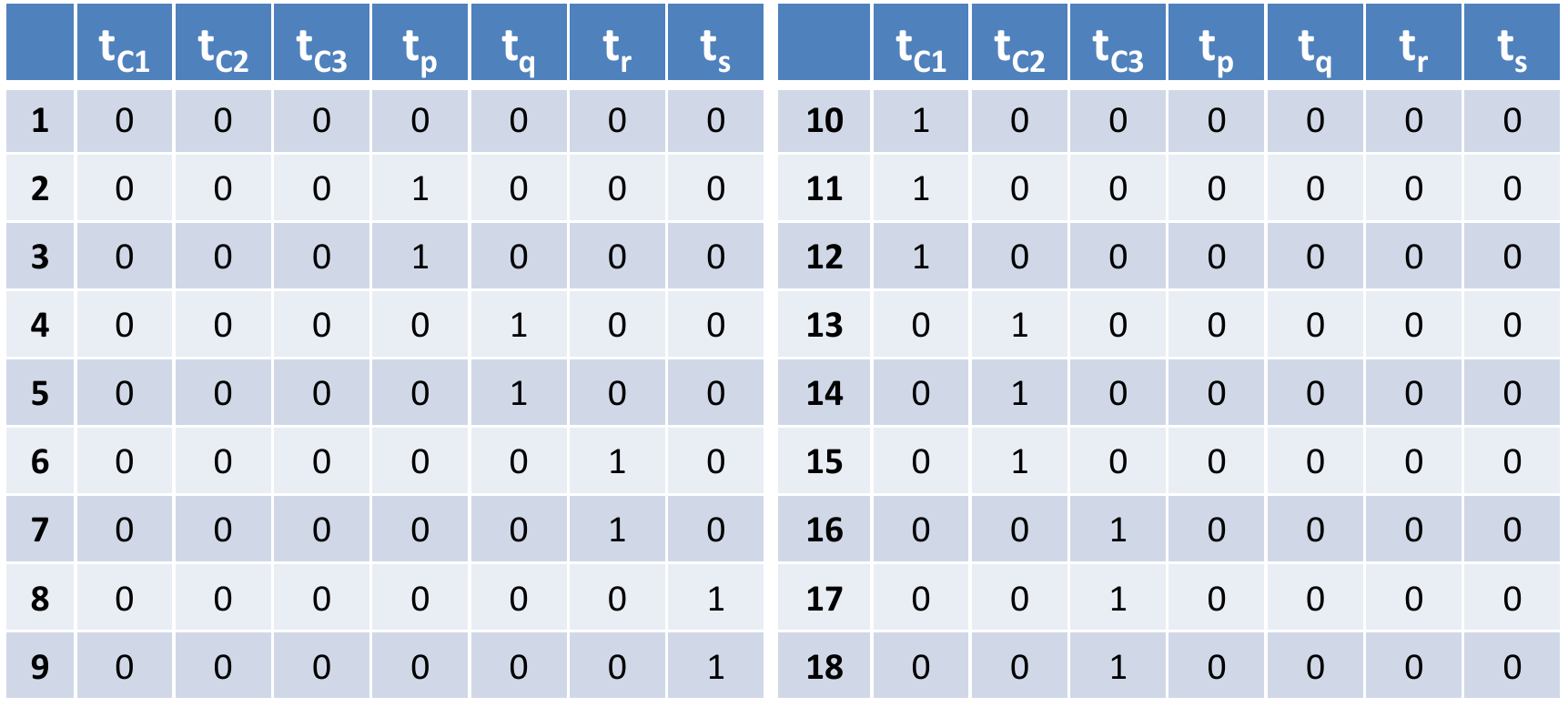}
	\caption{$w_{i, j}$ values for the corresponding $MWTM$ instance obtained from the example E3-SAT instance.}
	\label{fig:reductionExampleWeight}
\end{figure}

It should be noted that the construction constrains the weight values for variable nodes
$2$ through $9$ in the table to the left of Figure~\ref{fig:reductionExampleWeight}
through enforcement tasks in such a way that a node corresponding to a variable
and another node corresponding to its negation can not at the same time contribute
to a solution. An inspection of the weight values for literal nodes $10$ through $18$
in the table to the right in Figure~\ref{fig:reductionExampleWeight} reveals
similarly that only one of three such nodes can contribute to a solution
through a corresponding clausal task.

If a given 3-CNF formula is satisfiable, any satisfying truth assignment
induces a straightforward matching of the nodes to all the available tasks
such that the variable nodes at level $1$ evaluating to false
with respect to the given truth assignment are all assigned to their
corresponding enforcement tasks leaving only their negations for
a consistent instantiation over the entire set of clauses.
There are several ways to satisfy the above example
formula. Let us assume that we pick an assignment as follows: Variables
$p$ and $q$ are both assigned to $true$ while $r$ and $s$ can be assigned randomly.
 If we reflect these choices on the instance of $MWTM$ obtained, tasks
$t_{C1}$ and $t_{C2}$ are assigned respectively to nodes $10$ and $13$
both corresponding to $p$ while task $t_{C3}$ is assigned to node $16$
corresponding to $q$. Then, we can assign enforcement task $t_p$ to node
$3$ corresponding to $\neg p$, task $t_q$ to node $5$ corresponding to
$\neg q$. Finally, we assign task $t_r$ to either one of the nodes $6$ or $7$,
and task $t_s$ to either one of the nodes $8$ or $9$. For the last two, $r$ and $s$,
the choice is not really important, since neither one of these variables has been used
in satisfying the clauses.\qed
\end{example}

The transformation described in this section is certainly polynomial in the size of the given E3-SAT
instance. This is easily observed by noting that the number of nodes created to form a
tree in the corresponding instance of $MWTM$ is $2n+3m+1$, and $O(n+m)$ additional
processing is needed in the worst case for every node as its weight to $n+m$ tasks are
all initialized to either $0$ or $1$ resulting in a total time proportional to $O((n+m)^2)$.

\begin{theorem}\label{theo:NP-hard}
$MWTM$ problem is NP-hard.
\end{theorem}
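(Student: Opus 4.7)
The plan is to establish NP-hardness by chaining together the polynomial-time reduction constructed in this section with Lemma~\ref{lem:iff} and the NP-completeness of E3-SAT. First I would make explicit the decision version of $MWTM$: given an $MWTM$ instance together with a threshold $W$, does there exist a feasible assignment of total weight at least $W$? Since any polynomial-time algorithm that computes the optimum weight of an $MWTM$ instance immediately answers this decision question, it suffices to argue that the decision version is NP-hard.

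Next I would invoke the transformation already spelled out: for a given E3-SAT instance with $n$ variables and $m$ clauses, produce the corresponding tree $T$ on $1+2n+3m$ nodes together with $n+m$ tasks and the weight function from Equation~\ref{eqn:weight}, and set the target threshold to $W=n+m$. The paragraph immediately preceding the theorem already verifies that this construction runs in time $O((n+m)^2)$, so it is a polynomial-time many-one reduction. Its correctness is exactly the content of Lemma~\ref{lem:iff}: the E3-SAT instance is satisfiable if and only if the constructed $MWTM$ instance admits an assignment of weight $n+m$. Combining this with the NP-completeness of E3-SAT, which was established earlier through the dummy-variable padding of 3-SAT on top of Karp's result \cite{K72}, yields NP-hardness of the decision version and hence of $MWTM$ itself.

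There is essentially no new technical obstacle at this stage, since the delicate part, designing a weight function on the three-level tree that simultaneously enforces truth-value consistency through the enforcement tasks at level~$1$ and clause satisfaction through the clausal tasks at level~$2$, has already been carried out and justified in Lemma~\ref{lem:iff} and Observation~\ref{obs:varOrNegation}. The only point I would state explicitly in the write-up is that $W=n+m$ is a structural upper bound on the total weight (no individual $w_{i,j}$ exceeds $1$ and there are exactly $n+m$ tasks), so that achieving this threshold really does force an assignment of the form exploited in the proof of Lemma~\ref{lem:iff}. Given this, the proof of the theorem reduces to a single sentence assembling the reduction, Lemma~\ref{lem:iff}, and the NP-completeness of E3-SAT.
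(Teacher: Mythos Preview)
Your proposal is correct and follows essentially the same approach as the paper: the paper's own proof of this theorem is a one-line appeal to Lemma~\ref{lem:iff} together with the observation that the transformation runs in polynomial time. Your additional remarks about the decision version with threshold $W=n+m$ and the fact that $n+m$ is a structural upper bound on the total weight are perfectly sound elaborations, but they go beyond what the paper itself writes at this point.
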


\begin{proof}
It follows easily from Lemma~\ref{lem:iff}, and the fact that transformation is
polynomial in the size of the given E3-SAT instance.
\qed
\end{proof}

\iffalse
%%(Decision version is NP-complete by the way. In the decision version,
%%another parameter $W$ is introduced to ask whether there exists a solution with
%%total weight $W$. Then the same reduction works to prove E3-SAT is satisfiable
%%iff $MWTM$ has a solution with weight greater than or equal to $n+m$.)
\fi

MAX-E3-SAT is an optimization problem which generalizes E3-SAT in such a way that
instead of a satisfying assignment, it finds an assignment satisfying the maximum
number of clauses in a given 3-CNF formula. As shown in \cite{H01}, it is NP-hard
to approximate satisfiable MAX-E3-SAT instances to within a factor $7/8+\epsilon$
of the optimal for any $\epsilon \in (0,1]$. It is accordingly noted at this point that we can
slightly modify the illustrated transformation from E3-SAT to $MWTM$ to obtain a
transformation also from MAX-E3-SAT to $MWTM$. First, the weights of assigning
the variable nodes to the corresponding enforcement tasks are set to $m$
(the number of clauses). Then, $m$ additional dummy nodes whose weights
of executing any one of the tasks have all been initialized to zero are introduced as
children directly to the root. It is now easily seen that any given instance of
MAX-E3-SAT denoted by $\Pi_1$ has a solution with value $k^*$ if and only if
the corresponding instance of $MWTM$ denoted by $\Pi_2$ has a solution with
a value of $k^*+mn$. This polynomial time reduction can also be used to
establish that $MWTM$ cannot have a polynomial-time approximation scheme (PTAS).
Otherwise, we could use it to obtain a $7/8+\epsilon$ approximation algorithm
for MAX-E3-SAT, and hence a contradiction. In order to see this, let us assume
that $MWTM$ has a $1-\delta$ approximation where $\delta \in (0,1]$.
For a given instance of MAX-E3-SAT, the corresponding instance of $MWTM$
is first obtained in polynomial time using the transformation just depicted.
Setting $\delta=\frac{1}{mn}(1/8-\epsilon)$, the approximation algorithm
for $MWTM$ is run next on the transformed instance to return
$k+mn \ge (1-\delta)(k^*+mn)$ where $k$ and $k^*$ are the number of
clausal tasks in the approximate and optimal solutions respectively.
We can then write the inequality $(k^*+mn)-(k+mn) \le (k^*+mn)-(1-\delta)(k^*+mn)$.
Arranging the left and the right hand sides, we obtain $k^*-k \le \delta(k^*+mn)$.
For sufficiently large values of $m$ and $n$, the inequality can be rewritten as
$k^*-k \le \delta mnk^*$ which is, in turn, arranged to give $(1-\delta mn)k^* \le k$.
Substituting the value for $\delta$, $(7/8+\epsilon)k^* \le k$ is readily obtained
contradicting the fact that no such approximation is possible unless $P=NP$.
A very trivial result can hence be stated as in the following corollary.
%%(??? or The transformation just depicted is an approximation preserving reduction.
%%More formally, it is an \textit{L-reduction} as introduced in \cite{PY88}.)

\begin{corollary}\label{theo:NP-hard}
There exists no $1-\epsilon$ approximation algorithm for $MWTM$ problem where $\epsilon \in (0,1]$
unless $P=NP$.
\end{corollary}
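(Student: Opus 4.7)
The plan is to establish inapproximability by a gap-preserving reduction from MAX-E3-SAT, exploiting H\aa{}stad's theorem that it is NP-hard to approximate MAX-E3-SAT within $7/8+\epsilon$ \cite{H01}. First I would slightly modify the reduction used in Lemma~\ref{lem:iff}: inflate the weight of every variable node on its corresponding enforcement task from $1$ up to $m$, and attach $m$ extra dummy children to the root, each with weight $0$ on every task. The inflated enforcement weights create a large ``baseline'' of value $mn$ that any sensible MWTM solution must collect in full, while the dummy leaves guarantee that feasibility is preserved regardless of how the variable nodes are consumed by the enforcement tasks.

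Next I would establish the value correspondence: given a MAX-E3-SAT instance $\Pi_1$ with optimum $k^\ast$, the transformed MWTM instance $\Pi_2$ has optimum $k^\ast + mn$. The ``$\ge$'' direction mirrors the sufficiency part of Lemma~\ref{lem:iff}: use the negation of a satisfying literal to absorb each enforcement task (contributing $mn$), and route one level-$2$ literal node per satisfied clause to the corresponding clausal task (contributing $k^\ast$). The ``$\le$'' direction relies on Observation~\ref{obs:varOrNegation}: any solution of value exceeding $mn$ must have already consumed all enforcement tasks at level $1$, after which the hierarchy constraint forces the remaining clausal contributions to correspond to a consistent truth assignment.

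To finish, I would argue by contradiction. Suppose MWTM admits a polynomial-time $(1-\delta)$-approximation for some $\delta \in (0,1]$. Running it on $\Pi_2$ returns a solution of value $k + mn \ge (1-\delta)(k^\ast + mn)$, where $k$ counts the recovered satisfied clauses. Rearranging yields $k^\ast - k \le \delta(k^\ast + mn)$, and for sufficiently large $m,n$ the right-hand side is bounded by $\delta mn\,k^\ast$, giving $(1-\delta mn)k^\ast \le k$. Choosing $\delta = \frac{1}{mn}(1/8-\epsilon)$ then yields $(7/8+\epsilon)k^\ast \le k$, contradicting \cite{H01} unless $P = NP$.

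The step I expect to require the most care is the calibration of the baseline: the enforcement weight must be chosen large enough (scaling with $m$) so that a multiplicative slack $\delta$ on the MWTM optimum $k^\ast + mn$ translates into a \emph{multiplicative} slack on the MAX-E3-SAT optimum $k^\ast$ alone, while still keeping the instance size polynomial. Everything else is a direct lift of the transformation already verified in Lemma~\ref{lem:iff}, together with the standard trick of padding with zero-weight dummy leaves to preserve feasibility.
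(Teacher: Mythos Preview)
Your proposal is correct and follows essentially the same route as the paper: the identical modification (enforcement weights inflated to $m$, $m$ zero-weight dummy leaves under the root), the same value correspondence $k^\ast \leftrightarrow k^\ast+mn$, and the same contradiction argument with the choice $\delta=\frac{1}{mn}(1/8-\epsilon)$ leading to $(7/8+\epsilon)k^\ast\le k$ against H\aa{}stad's bound. The only cosmetic difference is that the paper fixes $\delta$ before running the approximation while you pick it at the end, but the algebra and the conclusion are identical.
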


\section{ILP Formulation of \textit{MWTM} Problem}\label{sec:ILP}

In an instance of $MWTM$, the number of nodes organized as a tree, $T$, and the number
of tasks are given by $n$ and $m$ respectively. The weight of executing each task $j$ by
a node $i$ is also denoted by $w_{i,j}$ where $i \in [1..n]$  and $j \in [1..m]$. Let $r$
designate the root of this tree, $T$. Let us denote by $ \lambda \subseteq \{1..n\}$
the leaf nodes of $T$. Each unique path from the root $r$ to a leaf node $k \in \lambda$
is represented by a set of nodes on this path which is denoted by $\Pi_k$. Integer Linear
Programming (ILP) formulation of $MWTM$ problem
can thus be given as:

\begin{align}
&\text{maximize } \sum_{i=1}^{n} \sum_{j=1}^{m} w_{i, j}*x_{i, j} \label{eqn:LP}\\
&\text{subject to} \nonumber\\
&\sum_{j=1}^{m}x_{i, j} \le 1, \forall i \in \{1..n\} \label{eqn:LP1} \\
&\sum_{i=1}^{n}x_{i, j} = 1, \forall j \in \{1..m\} \label{eqn:LP2} \\
&\sum_{i \in \Pi_k} \sum_{j=1}^{m} x_{i, j} \le 1, \forall k \in \lambda \label{eqn:LP3} \\
&x_{i, j} \in \{0, 1\}, \forall i \in \{1..n\} \text{ and } \forall j \in \{1..m\} \label{eqn:LP4}
\end{align}

The inequality in (\ref{eqn:LP1}) simply means a node can be assigned to at most one
task. The constraint in (\ref{eqn:LP2}) is used to enforce that every task is executed
by a single node. In order to enforce that on any path leading to a leaf node, at most
one node can be assigned to a task, (\ref{eqn:LP3}) is used. Finally, (\ref{eqn:LP4}) is
there to make sure that decision variables $x_{i, j}$ can take on the integer values $0$
and $1$ only. The given ILP formulation can readily be relaxed to an LP by removing
the last constraint (\ref{eqn:LP4}) which restricts $x_{i, j}$ values to either $0$ or $1$. 

\section{Bottom-Up Assignment Heuristic}\label{sec:bottom-up}

In this section, a heuristic solution is developed in an effort to solve $MWTM$ effectively.
When ILP formulation is relaxed by removing the last constraint (\ref{eqn:LP4}) to obtain
an LP model, $x_{i, j}$ can take on fractional values in the range $[0,1]$. To cope with
these fractional values in order to come up with a feasible integer solution,
\textsc{Bottom-Up-Assignment} (BOA) procedure given in Algorithm~\ref{alg:bottom-up} is used.

Before giving a detailed explanation of BOA, a high level description of the heuristic
can be presented as follows: First, a call is made to obtain a solution
to LP relaxation of ILP formulation of $MWTM$. Then, this possibly fractional
solution is converted to a feasible, partial 0-1 solution where leaf nodes with
greater fractional assignments are favored. The remaining nodes and tasks
that are still not allocated at this current episode, if any, form a smaller
instance of $MWTM$ which is simply handed over to a subsequent iteration.
At this successive iteration, a new call to LP relaxation for the smaller
instance is issued. This process is repeated as long as there are tasks
not assigned yet. The entire heuristic hence works its way via making
leaf-assignments between successive calls to LP.

\SetAlgoSkip{smallskip}
\begin{algorithm}[htb]
\begin{scriptsize}
\SetLine
\KwIn{$T$ is a tree modeling the parent-child relationships among
$n$ nodes rooted at node $r$; $w$ is a 2-dimensional array where
$w_{i, j}$ denotes the weight of assigning node $i$ to task $j$ for all $i \in \{1..n\}$, and $j \in \{1..m\}$.
The number of tasks, given by $m$, satisfies $m > 1$ as $m = 1$ case is trivial to handle;
The weight function is such that $w_{r, j} = 0$ for all $j \in \{1..m\}$ since the root can only be
assigned when there is only one task in the problem instance.}
\KwOut{A feasible assignment $\alpha$ of $m$ tasks to nodes in $T$.}
$\alpha \leftarrow \emptyset$\; \label{boa:initAssignments}
$tasksLeft \leftarrow \{1..m\}$; $nodesLeft \leftarrow \{1..n\}$\label{boa:initTasksNodes}\;
\tcp{a call to LP with substitutions $x_{i, j} \leftarrow 1$  $\forall (i, j) \in \alpha$}
$x_{i, j} \leftarrow LP(T, w, n, m, \alpha)$ $\forall i \in \{1..n\} \text{ and } j \in \{1..m\}$;\label{boa:LP}
\tcp{check for a feasible solution!}
$T' \leftarrow deleteNodes(T, \{i\vert(i, j) \in \alpha\})$\label{boa:initTmpT}; \tcp{delete all nodes assigned}
$\lambda \leftarrow leaves(T')$\label{boa:initLambda}\;
\tcp{leaves with a non-zero assignment are examined in decreasing order of $x_{i, j}$ values}
\While{$(\max x_{i, j} \ne 0$ where $(i \in \lambda \cap nodesLeft)$ and $(j \in tasksLeft)$ can be found$)$
\label{boa:mainWhile}}
{
 	$\alpha \leftarrow \alpha \cup \{(i, j)\}$; \tcp{record this assignment}\label{boa:updateAlpha}
	$\lambda \leftarrow \lambda - \{i\}$\label{boa:updateLambda}\;
	$tasksLeft \leftarrow tasksLeft - \{j\}$; $nodesLeft \leftarrow nodesLeft - \{i\}$\label{boa:updateTasksNodes}\;
	$\Pi_{i} \leftarrow$ set of nodes on the path from $i$ to $r$ in $T'$\label{boa:nodesOnPath}\;
	\tcp{remove all the nodes from $i$ up to $r$ in $T'$ from consideration}
	\lForEach{$k \in (\Pi_{i}-\{i\})$}
	{
		$nodesLeft \leftarrow nodesLeft - \{k\}$\label{boa:hierarchy}\;
	}
	$T' \leftarrow deleteNodes(T', \{i\})$\label{boa:deleteFromTmpT}; \tcp{delete $i$ in $T'$}
}\label{boa:endWhile}
\If{$(tasksLeft \ne \emptyset)$\label{boa:ifLine}}
{
	$T' \leftarrow deleteNodes(T', \lambda)$;\label{boa:deleteRemaining} \tcp{to give ancestors a chance}
	$leavesLeft \leftarrow nodesLeft \cap leaves(T');$ $nodesLeftInT' \leftarrow nodesLeft \cap nodes(T');$
	\label{boa:initLeavesAndNodesLeftInTtmp}\\
	\uIf{$(nodesLeftInT'$ has nodes with $x_{i, j} \ne 0)$ and
	$(\vert leavesLeft \vert \ge \vert tasksLeft \vert)$\label{boa:newCall}}
%%	\uIf{$(nodesLeft \cap nodes(T')$ has nodes with $x_{i, j} \ne 0)$ and
%%	$(\vert \lambda \cap nodesLeft \vert \ge \vert tasksLeft \vert)$\label{boa:newCall}}
	{
 		\textbf{go to} step~\ref{boa:initLambda}\label{boa:gotoWhile}\;
 	}
	\Else
	{
		\textbf{go to} step~\ref{boa:LP}\label{boa:gotoLP};
	}
}
\Return{assignment $\alpha$}\label{boa:return}\;
\caption{\textsc{Bottom-Up-Assignment}($T, w, n, m$)}
\label{alg:bottom-up}
\end{scriptsize}
\end{algorithm}

BOA in Algorithm~\ref{alg:bottom-up} assumes that the number of tasks and nodes
are represented by $m$ and $n$ respectively. The number of tasks, $m$, is greater
than $1$ to address only the non-trivial instances of $MWTM$. It is also assumed
that the root of the tree is dummy, i.e., it cannot be assigned to a task as the other
nodes would be rendered useless otherwise. The input to this algorithm are a tree $T$ with
$n$ nodes, and weights $w_{i, j}$ for each node-task pair $(i, j)$ of performing task
$j$ by node $i$.

BOA starts by initializing the set $\alpha$ of assignments at line~\ref{boa:initAssignments}
to be empty. At line~\ref{boa:initTasksNodes}, both sets $tasksLeft$ and $nodesLeft$ used
to keep track of the remaining tasks and the remaining nodes respectively are initialized. The
call to LP, next at line~\ref{boa:LP}, takes as parameters the original $MWTM$ instance
along with the assignments made so far to construct and also solve the LP formulation
given by (\ref{eqn:LP}) through (\ref{eqn:LP3}) of the given $MWTM$ instance with
respect to the set of already made assignments in $\alpha$. If a feasible solution exists, a 2-dimensional
array $x$ of possibly fractional values are returned by this call. The effect of the parameter
$\alpha$ is to set all $x_{i, j}$ values to $1$ in the corresponding LP formulation for all
node-task pairs $(i, j) \in \alpha$. This definitely ensures that neither the ancestors nor
the descendants of already assigned nodes in any feasible solution can have a non-zero
assignment value $x_{i, j}$ associated with them. Line~\ref{boa:initTmpT} deletes all
the nodes in $T$ assigned so far by BOA to obtain a new tree $T'$. This tree $T'$ along
with $nodesLeft$, $tasksLeft$, and the unmodified weight function $w$ actually identify
a residual $MWTM$ instance obtained by reflecting the current assignments in $\alpha$ made so
far into the original instance. This is achieved simply by pruning the assigned nodes, and hence their
descendants from $T$ to obtain $T'$ as well as keeping the set $nodesLeft$ synchronized
in the algorithm by removing the ancestors of these already assigned nodes from it to enforce
the hierarchy constraint. It should be observed at this point that the most recent LP relaxation
formulation at line~\ref{boa:LP} corresponds exactly to this residual $MWTM$ instance as
represented by the current values of the variables in $(T', nodesLeft, tasksLeft, w)$ held
at the time when line~\ref{boa:initTmpT} gets executed. The first parameter to
function $deleteNodes()$ is immutable, and is not modified in the function.
After the set $\lambda$ is populated with a copy
of the leaf nodes in $T'$ at line~\ref{boa:initLambda}, those leaves with a non-zero
assignment in it are examined in the order of non-increasing $x_{i, j}$ values
in the while-loop between lines \ref{boa:mainWhile} through \ref{boa:endWhile}.
The loop iterates as long as the maximum value assignment with a non-zero
$x_{i, j}$ between the leaf nodes and the tasks not assigned yet can be found.
Among the leaves in $\lambda$ which have not been assigned to a task yet, only the
ones not removed due to the hierarchy constraint are considered eligible as reflected
by the expression $(i \in \lambda \cap nodesLeft)$ where $nodesLeft$ keeps track of
the remaining nodes in the original tree $T$ which have yet been neither assigned nor
left in a non-assignable state as a result of the hierarchy constraint. Existence
of such an $x_{i, j}$ value requires that an assignment between node $i$ and task
$j$ gets recorded as illustrated at line~\ref{boa:updateAlpha}. This line amounts
effectively to setting $x_{i, j}$ to $1$. The following two lines \ref{boa:updateLambda}
and \ref{boa:updateTasksNodes} updates accordingly the set of leaves not considered
yet and the sets of remaining tasks and nodes after the assignment just made. Since this
recent assignment of node $i$ also necessitates that the nodes on the path from node $i$
up to the root $r$ are removed from any further consideration for a possible assignment, such
nodes computed at line~\ref{boa:nodesOnPath} are accordingly deleted from $nodesLeft$
at line~\ref{boa:hierarchy}, and get the right treatment. As the final statement
at line~\ref{boa:deleteFromTmpT} in the body of the while-loop, node $i$ is pruned
from $T'$ by the respective call. The thread of control is transferred to line~\ref{boa:ifLine}
to test whether any tasks have been left not assigned, as soon as it breaks out of the loop.
If all tasks have already been assigned, the set of assignments constructed so far is
returned at line~\ref{boa:return} as the solution. Otherwise, the remaining leaf nodes
are first deleted from $T'$ at line~\ref{boa:deleteRemaining} to give their ancestors a
chance before a new call to LP is made. Then at line~\ref{boa:initLeavesAndNodesLeftInTtmp},
a set of leaves in $T'$ that are also in $nodesLeft$ denoted by $leavesLeft$, and a set of all
the nodes in $T'$ that are also in $nodesLeft$ represented by $nodesLeftInT'$ are computed.
Finally, at line~\ref{boa:newCall}, a conditional check consisting of the conjunction of
two expressions is performed. The former expression evaluates to true if the nodes
in $T'$ that can still be used for further assignments have non-zero $x_{i, j}$ values
with $j \in tasksLeft$. The latter expression called the \emph{feasibility invariant} is
maintained throughout the entire execution of the algorithm. It basically ensures
that the number of the leaf nodes still assignable are always greater than the number of
the remaining tasks. If both expressions evaluate to true, execution continues by
setting $\lambda$ to the leaf nodes in the updated $T'$ at line~\ref{boa:initLambda}
to get ready for the subsequent execution of the while-loop once more, and
otherwise a jump to line~\ref{boa:LP} occurs where a new invocation to LP occurs.
All the deletions performed at line~\ref{boa:deleteRemaining} in $T'$ are effectively
rolled back at line~\ref{boa:initTmpT}.

Both $deleteNodes()$ and $leaves()$ which are based on post-order traversal
run in time proportional to the number of nodes in the tree they operate on.
An implementation making an efficient evaluation at the start of every iteration
of the while-loop possible employs max-heaps one for every task not assigned
yet whose roots are also organized as a max-heap. Overall running time complexity
of the heuristic is, however, dominated by the calls to LP at line~\ref{boa:LP}. As
after each call, if a feasible solution exists, BOA assigns at least one task before
the next call to LP, the total number of LP calls made is equal to the number of
tasks, $m$, in the worst case. Since LP lends itself to polynomial solutions
\cite{Khachiyan79, Karmarkar84}, BOA is easily demonstrated to be also
polynomial in its worst case running time. The overhead originating from
the repetitive nature of the heuristic is discussed also in the next section, and it is
shown through experiments that the actual observed value for the number of times
the call at line~\ref{boa:LP} to LP gets executed is almost constant on the average.

If there is a 0-1 assignment to ILP formulation of a given $MWTM$ instance,
its LP relaxation has certainly a fractional assignment with total weight at least
that of ILP. In such a case, this fractional assignment can always be converted
to a feasible 0-1 assignment by BOA in Algorithm~\ref{alg:bottom-up}. In an
effort to prove this, a series of lemmas will be presented and some observations
regarding the algorithm will be made.

A trivial observation could be made at this point by simply noting that
a condition which ensures that the number of leaf nodes is greater
than or equal to the number of tasks in a given instance of $MWTM$
is both necessary and sufficient for the existence of a solution.

\begin{lemma}\label{lem:MWTMfeasible}
A given instance of $MWTM$ represented by $(T, w, m, n)$ where $T$
is a tree, and $w(i, j)$ is the weight of assigning node $i$ in $T$ to task $j$
for all combinations of $i \in \{1..n\}$ and $j \in \{1..m\}$ has a solution
if and only if $\vert \lambda \vert \ge m$ where $\lambda$ denotes the
set of leaf nodes in $T$.
\end{lemma}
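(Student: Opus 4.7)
The plan is to prove both directions separately, with the forward direction ($\Rightarrow$) being the one requiring a bit of structural reasoning, while the reverse direction ($\Leftarrow$) is essentially a one-line construction.

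For the reverse implication, I would simply exhibit a feasible assignment whenever $|\lambda|\ge m$. Choose any $m$ distinct leaves $\ell_1,\ldots,\ell_m$ from $\lambda$ and assign task $j$ to leaf $\ell_j$. Any two distinct leaves lie at the terminal ends of distinct root-to-leaf paths, and since no leaf can be an ancestor of another leaf, these paths are pairwise independent in the sense of Definition~\ref{def:independentPath}. The hierarchy constraint is therefore automatically satisfied, and we have a legal $MWTM$ solution (of possibly poor weight, but feasibility is all we need).

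For the forward implication, suppose a feasible assignment exists, and let $S\subseteq\{1,\ldots,n\}$ be the set of $m$ nodes assigned to tasks. By the hierarchy constraint, no two nodes in $S$ lie on a common root-to-leaf path, i.e., $S$ is an antichain in the ancestor-descendant partial order of $T$. The key observation I would establish is that we can choose an injection $\phi:S\to\lambda$ sending each assigned node to one of its leaf descendants (taking $\phi(i)=i$ if $i$ is itself a leaf). The map is well-defined because every node in a finite rooted tree has at least one leaf descendant, and injectivity follows from the antichain property: if $i,i'\in S$ are distinct and share a leaf descendant $\ell$, then both $i$ and $i'$ would lie on the root-to-$\ell$ path, forcing one to be an ancestor of the other and contradicting the hierarchy constraint. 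Hence $|\lambda|\ge|\phi(S)|=|S|=m$.

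The only mildly delicate step is the antichain-to-distinct-leaves argument in the forward direction, and even there the argument is essentially a two-line contradiction. The rest is bookkeeping. I would conclude the proof by combining the two implications into the stated equivalence.
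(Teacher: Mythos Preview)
Your proposal is correct and follows essentially the same approach as the paper: both directions match, with the reverse implication constructing a feasible assignment from $m$ distinct leaves, and the forward implication using the antichain property of the assigned nodes to bound $|\lambda|$ from below. Your forward direction is in fact slightly more explicit than the paper's, since you spell out the injection $\phi:S\to\lambda$ via leaf descendants, whereas the paper argues more tersely through pairwise independent root-to-node paths.
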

\begin{proof}
Let us prove the sufficiency part first. If a given $MWTM$ instance $(T, w, m, n)$
has a solution, then there exists an assignment of $m$ nodes in $T$ to $m$ tasks.
The hierarchy constraint in the definition of $MWTM$ problem requires, in turn, that
no two among these $m$ nodes has a parent-child relationship, and they are,
hence, on $m$ mutually independent paths (see Definition~\ref{def:independentPath}).
Therefore, the number of leaves in $T$ denoted by $\vert \lambda \vert$ cannot be
less than the number of available independent paths from these $m$ nodes to the root.

In order to prove the necessity part, we proceed as follows: As there are as
many as $\vert \lambda \vert \ge m$ leaf nodes, any subset of $m$ leaves
out of $\lambda$ can be freely picked, and assigned to available tasks in a
random order. Since each node that gets picked is on an independent path
ensuring that the hierarchy constraint is not violated, a feasible solution is
hence obtained.
\qed
\end{proof}

\begin{definition}\label{def:effectiveLeaf}
In a feasible solution to LP relaxation of a given $MWTM$ instance,
a node $i$ in tree $T$ associated with at least one non-zero $x_{i, j}$, and yet, not having any
such descendants in $T$ is defined to be an \textit{effective leaf} with
respect to the corresponding LP relaxation solution. The set of all
such nodes is termed \textit{effective leaves}.
\end{definition}

In the light of this definition, the following lemma can now be stated
regarding an LP relaxation formulation corresponding to a given
$MWTM$ instance.

\begin{lemma}\label{lem:LPfeasible}
If LP relaxation to a given $MWTM$ instance has a solution, then
the number of effective leaf nodes in the corresponding LP relaxation
is greater than or equal to the number of tasks in the given problem instance.
\end{lemma}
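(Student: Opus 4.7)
The plan is to aggregate the fractional assignment node by node, use the task equality constraints to compute the total mass, and then bound this mass above by $|\mathcal{E}|$ using the path constraints.  First I would define, for each node $i$, the aggregate
\[
y_i \;=\; \sum_{j=1}^{m} x_{i,j},
\]
so that summing constraint~(\ref{eqn:LP2}) over all tasks immediately gives $\sum_{i=1}^{n} y_i = m$, while constraint~(\ref{eqn:LP3}) rewrites in the clean form $\sum_{i \in \Pi_k} y_i \le 1$ for every actual leaf $k \in \lambda$ of $T$.  Let $\mathcal{E}$ denote the set of effective leaves in the sense of Definition~\ref{def:effectiveLeaf}; the goal becomes $|\mathcal{E}| \ge m$.

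Next I would establish two structural facts about $\mathcal{E}$.  First, $\mathcal{E}$ is an antichain in $T$: if two effective leaves lay on a common root-to-leaf path, the higher one would admit a descendant with positive $y$, contradicting the definition.  Second, every node $i$ with $y_i > 0$ is either itself an effective leaf or an ancestor of one; this follows by descending, from $i$, through children that still carry positive $y$ until no such child remains, a process that terminates since $T$ is finite.  Hence, for each $e \in \mathcal{E}$ I can pick an actual leaf $k_e \in \lambda$ that is a descendant of $e$ (or take $k_e = e$ when $e$ is itself a leaf of $T$), and the set $A(e)$ of ancestors of $e$ in $T$ (including $e$ itself) satisfies $A(e) \subseteq \Pi_{k_e}$.

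The heart of the argument is a double-counting step.  For each effective leaf $e$, the path constraint gives
\[
\sum_{i \in A(e)} y_i \;\le\; \sum_{i \in \Pi_{k_e}} y_i \;\le\; 1.
\]
Summing over $e \in \mathcal{E}$ and swapping the order of summation,
\[
|\mathcal{E}| \;\ge\; \sum_{e \in \mathcal{E}} \sum_{i \in A(e)} y_i \;=\; \sum_{i=1}^{n} y_i \cdot \bigl|\{\, e \in \mathcal{E} : i \in A(e)\,\}\bigr| \;\ge\; \sum_{i : y_i > 0} y_i \;=\; m,
\]
where the last inequality invokes the structural fact that every positive-$y$ node lies in at least one $A(e)$.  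This is precisely the claim.

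The one place that calls for care is the structural claim that every positive-$y$ node possesses an effective-leaf descendant, because the descending node need not reach an actual leaf of $T$; I would state this explicitly as a finite descent terminating at the first node all of whose descendants carry zero $y$.  This subtlety is exactly why effective leaves, rather than the leaves $\lambda$ of $T$, are the right objects to count against $m$ in the LP relaxation.
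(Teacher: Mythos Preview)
Your argument is correct and follows essentially the same route as the paper: aggregate the task constraints~(\ref{eqn:LP2}) to obtain total mass $m$, then sum the path constraints~(\ref{eqn:LP3}) over the effective leaves and use that every positive-mass node lies on at least one such path to conclude $m \le |\mathcal{E}|$. Your version is somewhat more explicit than the paper's in two places---you justify the descent to an effective leaf from any positive-$y$ node, and you route each effective leaf through an actual leaf $k_e$ before invoking~(\ref{eqn:LP3})---but these are refinements of the same idea rather than a different approach.
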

\begin{proof}
If a given $MWTM$ instance's LP relaxation has a solution, then the constraints
(\ref{eqn:LP1}) through (\ref{eqn:LP3}) must hold. Therefore, we obtain by
summing Equation~(\ref{eqn:LP2}) over all possible $j$ values:
\begin{equation}\label{eqn:totalAssignment}
\sum_{j=1}^{m}\sum_{i=1}^{n} x_{i, j}=m
\end{equation}
Let $\lambda_{e}$ denote the set of effective leaves in $T$ with respect to the
particular LP relaxation solution. Since no nodes other than those in $\lambda_{e}$
and their ancestors can have a non-zero $x_{i, j}$ value associated with them,
we next sum Inequality~(\ref{eqn:LP3}) over all the effective leaf nodes to obtain:
\begin{equation}\label{eqn:totalPathAssignment}
\sum_{k \in \lambda_{e}}\sum_{i \in \Pi_k} \sum_{j=1}^{m} x_{i, j} \le \vert\lambda_{e}\vert
\end{equation}
As the sum of individual $x_{i, j}$ values in (\ref{eqn:totalAssignment}) is less than or equal to
the sum in (\ref{eqn:totalPathAssignment}) over all paths leading to effective leaf
nodes, we conclude:
\begin{equation}\label{eqn:sufficiency}
m \le \vert\lambda_{e}\vert
\end{equation}
\qed
\end{proof}

We can now establish the following lemma by noting that the number of
effective leaves with respect to the corresponding LP relaxation solution
of a given $MWTM$ instance actually forms a lower bound for the number
of leaf nodes in $T$.

\begin{lemma}\label{lem:LPlambda}
The corresponding LP relaxation of a given $MWTM$ instance, $(T, w, m, n)$,
has a solution if and only if $\vert \lambda \vert \ge m$ where $\lambda$
denotes the set of leaf nodes in $T$.
\end{lemma}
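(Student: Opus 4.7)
The plan is to prove the two directions separately, leaning on the two immediately preceding lemmas. The forward direction (LP feasibility implies $|\lambda|\ge m$) is the substantive one; the reverse direction is essentially a free consequence of Lemma~\ref{lem:MWTMfeasible}.

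For the forward direction, I would begin by invoking Lemma~\ref{lem:LPfeasible} to conclude that the effective leaves $\lambda_e$ associated with the LP solution satisfy $|\lambda_e|\ge m$. The key combinatorial observation I then have to establish is that $|\lambda|\ge|\lambda_e|$. To see this, I would argue first that the effective leaves are pairwise incomparable in the tree order: if an effective leaf $u$ were a strict descendant of another effective leaf $v$, then $v$ would have a descendant (namely $u$) carrying a non-zero $x_{u,j}$, contradicting $v$'s status as an effective leaf by Definition~\ref{def:effectiveLeaf}. Because distinct effective leaves are incomparable, the subtrees of $T$ rooted at the effective leaves are pairwise disjoint, and each such subtree contains at least one actual leaf of $T$ (either the effective leaf itself, if it happens to be a leaf of $T$, or any leaf descendant). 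Assigning to each effective leaf one such actual leaf in its subtree produces an injection $\lambda_e\hookrightarrow\lambda$, which yields $|\lambda|\ge|\lambda_e|\ge m$ as required.

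For the reverse direction, I would simply observe that $|\lambda|\ge m$ guarantees, by Lemma~\ref{lem:MWTMfeasible}, a feasible $0/1$ assignment of the original MWTM instance. Such an integer assignment trivially satisfies the relaxed constraints \eqref{eqn:LP1}, \eqref{eqn:LP2}, and \eqref{eqn:LP3}, together with $x_{i,j}\in[0,1]$, so it is a feasible point for the LP relaxation. Hence the LP has a solution.

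The only real obstacle is the incomparability argument for effective leaves, and that is a direct unpacking of Definition~\ref{def:effectiveLeaf}; the rest is bookkeeping through the two supporting lemmas. I would present the proof in the order above — invoke Lemma~\ref{lem:LPfeasible}, prove $|\lambda|\ge|\lambda_e|$ via the injection from disjoint subtrees, and then close with the one-line reverse implication via Lemma~\ref{lem:MWTMfeasible}.
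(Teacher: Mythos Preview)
Your proposal is correct and follows essentially the same route as the paper: invoke Lemma~\ref{lem:LPfeasible} to get $|\lambda_e|\ge m$, pass to $|\lambda|\ge|\lambda_e|$, and handle the reverse direction via Lemma~\ref{lem:MWTMfeasible}. The only difference is that the paper simply asserts $|\lambda|\ge|\lambda_e|$ as known, whereas you spell out the incomparability/injection argument explicitly---so your version is, if anything, more complete.
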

\begin{proof}
As to the sufficiency; if LP relaxation has a solution, then, by Lemma~\ref{lem:LPfeasible},
$\vert \lambda_{e} \vert \ge m$ where $\lambda_{e}$ is the set of effective leaves. As
it is known that $\vert \lambda \vert \ge\vert \lambda_{e} \vert$,
$\vert \lambda \vert \ge m$ follows easily.

In order to prove the necessity, on the other hand, we observe by
Lemma~\ref{lem:MWTMfeasible} that if $\vert \lambda \vert \ge m$, then
the given $MWTM$ instance has a solution. This latter result definitely
implies the existence of a solution to the corresponding LP relaxation
formulation.
\qed
\end{proof}

\begin{definition}\label{def:iteration}
An execution of BOA between successive calls to LP at line~\ref{boa:LP} is
called an \textit{iteration}.
\end{definition}

\begin{theorem}\label{theo:BOAfeasible}
BOA heuristic in Algorithm~\ref{alg:bottom-up} returns a feasible solution whenever there exists one.
\end{theorem}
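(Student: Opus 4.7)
The plan is to establish the theorem by simultaneously maintaining, by induction on the sequence of operations performed by BOA, two invariants: (I)~every call to LP at line~\ref{boa:LP} receives a feasible instance, and (II)~after every assignment recorded at line~\ref{boa:updateAlpha}, the residual MWTM determined by the current $T'$, $nodesLeft$, and $tasksLeft$ continues to admit a feasible solution. Under the hypothesis that the original instance is feasible, Lemma~\ref{lem:MWTMfeasible} gives $|\lambda| \ge m$, and Lemma~\ref{lem:LPlambda} then supplies the base case for Invariant~I at the very first LP call.

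The inductive step for Invariant~II is the core accounting step. When BOA assigns a leaf $i$ of $T'$ to a task $j$, both $|tasksLeft|$ and $|leaves(T') \cap nodesLeft|$ should decrease by exactly one. The cardinality of $tasksLeft$ trivially loses $j$; for the leaf count, $i$ is removed from both $leaves(T')$ and $nodesLeft$, while the only candidate to newly enter $leaves(T')$ after the pruning of $i$ is the parent of $i$, which the hierarchy update on line~\ref{boa:hierarchy} simultaneously removes from $nodesLeft$. Combining this with Lemma~\ref{lem:MWTMfeasible} applied to the residual tree yields Invariant~II, and then Lemma~\ref{lem:LPlambda} applied to the same residual immediately yields Invariant~I whenever control returns to line~\ref{boa:LP} via line~\ref{boa:gotoLP}.

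Termination and correctness then reduce to showing that between any two consecutive LP calls BOA records at least one new assignment. I would argue this by tracking the ``climbing'' process induced by the \emph{go to}~\ref{boa:initLambda} branch of line~\ref{boa:newCall}. Each LP solution carries total fractional mass exactly $|tasksLeft|$, and constraint~(\ref{eqn:LP3}) forces $x_{i,j}=0$ on every node that is an ancestor or descendant of an already-assigned node, so all non-zero mass lies on nodes in $nodesLeft$. Each pass through \emph{deleteRemaining} followed by \emph{go to}~\ref{boa:initLambda} strictly shrinks $T'$ and exposes fresh leaves in the direction of the root; by Lemma~\ref{lem:LPfeasible} there are at least $|tasksLeft|$ effective leaves (in the sense of Definition~\ref{def:effectiveLeaf}) carrying non-zero $x_{i,j}$, so the while-loop must eventually turn one of them into a literal leaf of $T'$ and make an assignment before $T'$ is exhausted.

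The hard part will be nailing down the interaction between the two \emph{go to} branches at line~\ref{boa:newCall} and ruling out the only pathological scenario---namely, that the climbing process terminates with $nodesLeftInT'$ devoid of non-zero $x_{i,j}$ values before any new assignment is made, in which case BOA would re-issue an LP call identical to the previous one and loop forever. Excluding this scenario reduces to a careful analysis of which nodes can be exposed as leaves during the climbing phase, combined with the fact that~(\ref{eqn:LP3}) prevents fractional mass from being stranded on forbidden ancestors. Once this subtlety is handled, BOA performs at most $m$ LP calls, each followed by at least one assignment, and halts with $tasksLeft = \emptyset$ and a feasible $\alpha$.
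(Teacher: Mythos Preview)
Your proposal is correct and follows essentially the same route as the paper's proof: both maintain the feasibility invariant $|\text{leaves}(T')\cap nodesLeft|\ge|tasksLeft|$ across LP calls via Lemmas~\ref{lem:MWTMfeasible} and~\ref{lem:LPlambda}, and both argue termination by showing that every iteration records at least one assignment, leaning on Lemma~\ref{lem:LPfeasible} to guarantee the presence of at least $|tasksLeft|$ effective leaves. Where you differ is in rigor rather than strategy: the paper simply \emph{asserts} that an assignment occurs in every iteration, whereas you correctly isolate the potential infinite-loop hazard (an unchanged $\alpha$ triggering an identical LP call) and sketch the climbing argument needed to exclude it---note only that the test at line~\ref{boa:newCall} has \emph{two} conjuncts, so the pathological scenario you must rule out also includes $|leavesLeft|<|tasksLeft|$ failing during pure climbing, though your proposed analysis (each surviving effective leaf contributes at least one descendant to the new leaf set of $T'$) handles both conjuncts simultaneously.
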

\begin{proof}
The algorithm will keep repeatedly performing iterations until all tasks in $tasksLeft$ are exhausted,
and finally an assignment is obtained. Every single iteration of BOA is launched at line~\ref{boa:LP}
in an attempt to discover an assignment for a smaller residual $MWTM$ instance which is identified
by the values that the tree $T'$, $nodesLeft$, $tasksLeft$, and the unmodified weight function $w$
have right after the statement at line~\ref{boa:initTmpT} gets executed. The most recent LP
relaxation formulation at line~\ref{boa:LP} corresponds exactly to this instance whose full recognition
is achieved interestingly enough later at the next statement.

It is known by Lemma~\ref{lem:LPlambda} that when the number of leaf nodes in $nodesLeft$
(given by $leaves(T') \cap nodesLeft$ as would be computed at line~\ref{boa:initLambda}) is
greater than or equal to the number of remaining tasks in $tasksLeft$ in the residual $MWTM$
instance at the start of an iteration $i$ before a call to LP, there must exist a feasible solution
to the corresponding LP relaxation formulation at line~\ref{boa:LP}. The existence of a feasible
solution to the corresponding LP relaxation, in turn, implies by Lemma~\ref{lem:LPfeasible} that
the number of effective leaf nodes in $T'$ computed at line~\ref{boa:initTmpT} is greater than or
equal to the number of tasks in $tasksLeft$. Therefore, at least one assignment between an
effective leaf and an available task will be performed in the while-loop between lines~\ref{boa:mainWhile}
through \ref{boa:endWhile} in every iteration of the algorithm, and BOA will eventually terminate.

An additional observation can be made by noting that the feasibility variant cannot be violated
so long as the while-loop iterates since it holds at the start, and the only type of modification
allowed in the body of the loop is the assignment of an effective leaf to an available task. Such
an assignment, however, removes exactly one leaf node and one task from consideration
ensuring that the feasibility invariant is still maintained.

Once the control breaks out of the while-loop, either a feasible solution by BOA is returned
if there are no more tasks left, or otherwise all the useless leaf nodes which survived the
previous while-loop are deleted at line~\ref{boa:deleteRemaining}. These leaf node
deletions are the only deletions that can possibly violate the feasibility invariant. Hence,
once such a violation is detected at line~\ref{boa:newCall}, a jump at line~\ref{boa:gotoLP}
initiates the next iteration where all such deletions are effectively rolled back by reconstructing
$T'$ from scratch. In case there are no such violations, control goes once more to the while-loop.
Consequently, the feasibility invariant is maintained from one iteration to the next
throughout the entire execution of the algorithm.

BOA will then always find a feasible solution as long as the feasibility constraint holds at the start
of the first iteration. But this is already guaranteed by Lemma~\ref{lem:MWTMfeasible}, hence
completing the proof.
\qed
\end{proof}

It is clearly not easy, if not impossible, to generate a feasible 0-1 assignment at once using
only possibly fractional non-zero assignments obtained from the corresponding LP relaxation
solution for all $MWTM$ instances. The difficulty stems from the fact that the distribution of
fractional assignment values returned by the corresponding LP relaxation solution may not
easily lend itself to an integer valued assignment for all tasks without violating the hierarchy
constraint. Therefore, as it is done in BOA, LP might need to be called iteratively in order to
cover the tasks that have not been already assigned in the previous iterations. In an attempt
to reduce the number of iterations, however, as many fractional assignment values as allowed
by the feasibility invariant is checked at each iteration in BOA as to their eligibility to contribute
to a feasible solution. Moreover, some zero valued assignments in previous iterations may come
out non-zero in subsequent iterations leading to solutions with smaller total weights. Thus, we
prefer to use the earliest LP results with non-zero assignments as much as possible.
%% It can be mentioned that using other heuristics where instead of $x_{i, j}$
%% values, terms such as $w_{i, j}*x_{i, j}$ at lines \ref{boa:mainWhile} and
%% \ref{boa:newCall} are quite possible.

An example is provided below for a better understanding of how BOA operates.

\begin{example}
%% start of example
An $MWTM$ instance is depicted in Figure~\ref{fig:sampleTreeOptILP}. The corresponding
tree representing the hierarchical structure of an organization has $6$ nodes numbered in
level order as shown in the figure where the root node is denoted by $1$. It is assumed in
this particular example that the organization has $3$ tasks to be executed not explicitly
shown in the figure. The weights of executing these tasks, namely $t_1$, $t_2$, and $t_3$
are given in this order as a triple inside each node. In this example, the corresponding ILP
formulation will produce the optimal solution with the following assignments highlighted with
the corresponding weight values in red in Figure~\ref{fig:sampleTreeOptILP}:
\begin{itemize}
\item{Task $t_1$ is assigned to node $4$ with weight $6$,}
\item{Task $t_2$ is assigned to node $5$ with weight $4$,}
\item{Task $t_3$ is assigned to node $3$ with weight $8$.}
\end{itemize}

\begin{figure}[h]
	\centering
	\includegraphics[width=2in]{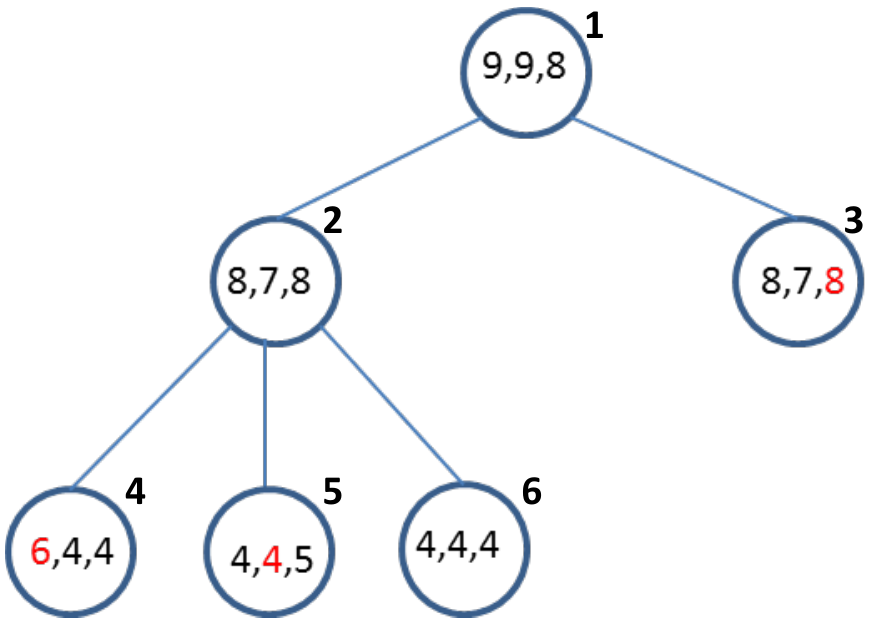}
	\caption{A sample tree structure with $6$ nodes, and $3$ tasks not explicitly
shown. Weights of executing each task $t_1$, $t_2$, and $t_3$ are given in this order
inside each node as triples. Red values correspond to node-task assignments obtained
from ILP solution.}
	\label{fig:sampleTreeOptILP}
\end{figure}

The next figure, Figure~\ref{fig:sampleTreeFirstLP}, presents a solution obtained by the
corresponding LP relaxation on the same problem instance (assignments are shown in green).
The solution is as follows:
\begin{itemize}
\item{Task $t_1$ is assigned to nodes $3$ and $4$ both with the same fractional value
$0.5$, contributing to the total weight by $7 (=8/2+6/2)$,}
\item{Task $t_2$ is assigned to nodes $5$ and $6$ both with the same value $0.5$ again,
contributing to the total weight by $4 (=4/2+4/2)$,}
\item{Finally, Task $t_3$ is assigned to nodes $2$ and $3$ both with the same value $0.5$,
causing this time an increase of $8 (=8/2+8/2)$ in the total weight.}
\end{itemize}

\begin{figure}[h]
	\centering
	\includegraphics[width=2in]{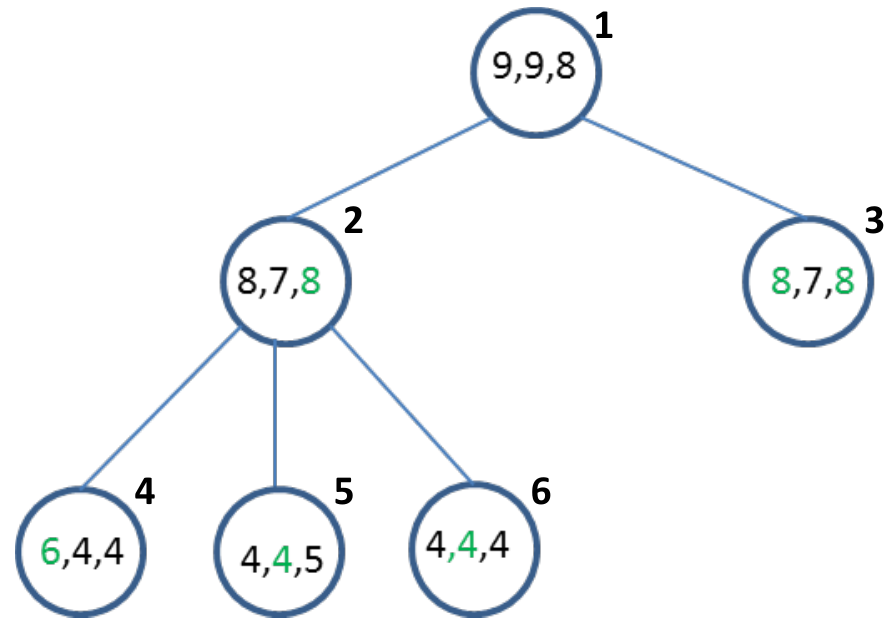}
	\caption{Green values correspond to fractional node-task assignments
obtained from the corresponding LP relaxation solution where task $t_1$ is assigned to both
nodes $3$ and $4$, task $t_2$ is assigned to nodes $5$ and $6$, and finally task $t_3$ is
assigned to nodes $2$ and $3$ all with the same value $\frac{1}{2}$.}
	\label{fig:sampleTreeFirstLP}
\end{figure}

Since LP is allowed to make fractional assignments, the weight $19$ of the solution
achieved by LP is even higher than the optimal $18$ found by ILP. The direct
application of LP unfortunately cannot produce an integer assignment for the
given $MWTM$ instance. BOA in Algorithm~\ref{alg:bottom-up}, however, will
work its way to a feasible solution as follows on this example:
\begin{itemize}
\item{After a call to LP is made at line~\ref{boa:LP} in the first iteration, potentially
fractional assignment values with non-zero $x_{i, j}$ will be processed from the
largest to the smallest for the leaf nodes of the tree. In this example, all the
assignment values happen to be the same, namely $0.5$. Such leaves may,
therefore, be processed in any order. Although different heuristics may also be
developed for breaking ties such as considering the depths of nodes or favoring
nodes with higher $w_{i, j}$ values, we assume for the sake of this example that
the assignments with the same value are processed in increasing order of node
identifiers and then in increasing order of task numbers. As a result, first, task $t_1$
is assigned to node $3$ with weight $8$. Then, task $t_2$ gets assigned to node $5$
with weight $4$. This assignments in the first iteration are shown in green as depicted
in Figure~\ref{fig:sampleTreeBottomUpAssignment}. At this point, there is obviously
no leaf node left with a non-zero assignment that can be used to make any further
assignments, leaving task $t_3$ hence unassigned. It should noted that while these
assignments are made, all the nodes violating the hierarchy constraint are also
removed from consideration. This is evidently reflected by leaving only the nodes
$4$ and $6$ in $nodesLeft$.}
\item{Once it is realized that no more assignments are possible, the remaining leaves,
namely $4$ and $6$, are deleted at line~\ref{boa:deleteRemaining} from the tree $T'$
leaving only the nodes $1$ and $2$ in it. As there are no nodes in the tree that are also in
$nodesLeft$, a jump to line~\ref{boa:LP} initiates the second iteration of the algorithm.}
\item{With $tasksLeft=\{t_3\}$ and $nodesLeft=\{4, 6\}$ at the start of the second
iteration, the only remaining task is $t_3$, and the remaining nodes that are eligible
for assignments are $4$ and $6$. Now a call is made to LP formulated with the
assignments made in the first iteration in mind. This formulation corresponds exactly
to an $MWTM$ instance where the tree denoted by $T'$ is obtained at
line~\ref{boa:initTmpT} by pruning nodes $3$ and $5$ from the original tree
denoted by $T$, and the set of eligible nodes and target tasks to be matched
are as dictated by the values of $nodesLeft$ and $tasksLeft$ at the moment.
The algorithm, hence, terminates by assigning the only remaining task $t_3$
to either node $4$ or node $6$ with weight $4$. Figure~\ref{fig:sampleTreeBottomUpAssignment}
shows this assignment in the second iteration in orange. The total weight
achieved by BOA is hence $8+4+4=16$, which is slightly less than the
optimal ILP solution.}
\end{itemize}

\begin{figure}[h]
	\centering
	\includegraphics[width=2in]{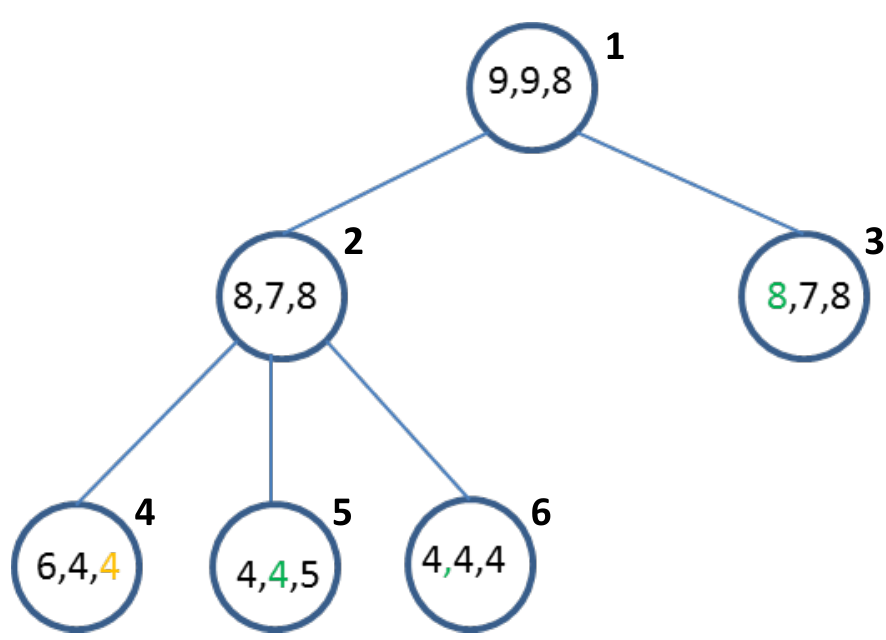}
	\caption{The assignments obtained by two LP calls. The first call generates
assignments for the tasks $t_1$ and $t_2$ (green), and the second call generates
the assignment for the task $t_3$ (orange).}
	\label{fig:sampleTreeBottomUpAssignment}
\end{figure}
\qed
%% end of example
\end{example}

Another example is presented now to demonstrate the feasibility invariant
at line~\ref{boa:newCall} of BOA in Algorithm~\ref{alg:bottom-up}.

\begin{example}
Figure~\ref{fig:needForInvariant} is an example to an $MWTM$ instance where deletions by BOA
at line~\ref{boa:deleteRemaining} of many leaves rooted at the same node renders this parent
as an effective leaf, and the subsequent assignment of this parent to a new task runs the risk of
an unanticipated decrease in the number of leafs leading to unfeasibility.

\begin{figure}[h]
	\centering
	\includegraphics[width=2.5in]{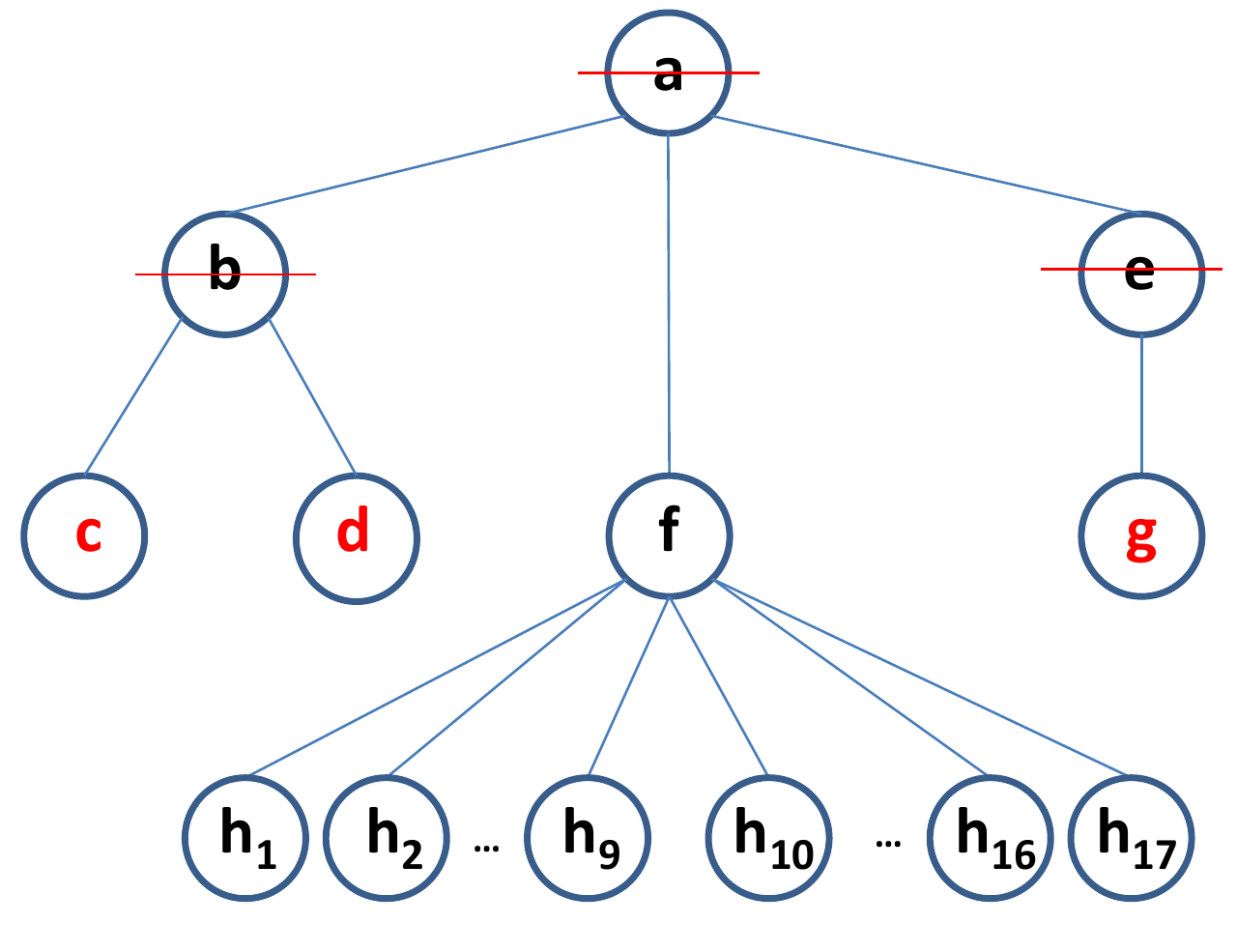}
	\caption{An instance illustrating the need for the invariant
at line~\ref{boa:newCall} of the algorithm.}
	\label{fig:needForInvariant}
\end{figure}

\begin{figure}[h]
	\centering
	\includegraphics[width=4in]{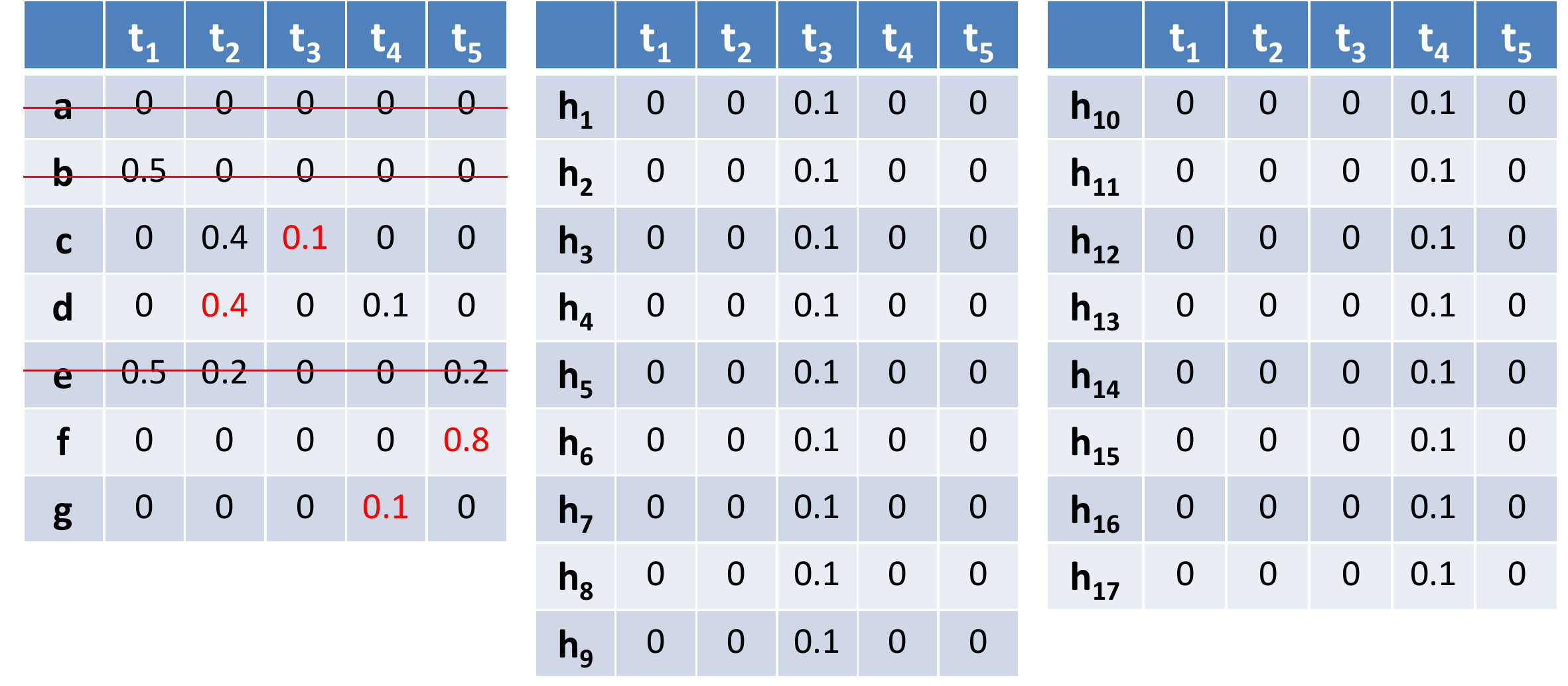}
	\caption{Assignments obtained by the first call to LP for the same instance in Figure~\ref{fig:needForInvariant}.}
	\label{fig:needForInvariantCost}
\end{figure}

Let us assume that there does exist a weight function $w_{i, j}$ such that $x_{i, j}$ values obtained
by an LP call are as given in Figure~\ref{fig:needForInvariantCost}. Then, the sum of assignments
to each of the five tasks is $1$, and each path from a leaf node to the root denoted by $a$ has total
weight less than or equal to $1$. If the algorithm is run, after the first call to LP, leaf node $d$ with the
maximum assignment value $0.4$ (breaking ties arbitrarily) gets assigned to task $t_2$. This assignment
makes the next highest assignment value $0.4$ for node $c$ unusable leaving us with the only option
of $0.1$ as to the next largest assignment value. Assuming that node $c$ is now picked to be assigned
to task $t_3$, followed by the same valued assignment of node $g$ to task $t_4$, all leafs $h_1$ through
$h_{17}$ are also rendered useless. As a result, they are all deleted at line~\ref{boa:deleteRemaining}.
This, in turn, would give way through a jump at line~\ref{boa:gotoWhile} to assigning node $f$ to
task $t_5$ within the body of while-loop, if it were not for the invariant at line~\ref{boa:newCall} in
the algorithm. Such an assignment clearly would have left no nodes that can be assigned to task $t_1$.

The feasibility invariant at line~\ref{boa:newCall} of the algorithm ensures that a sufficient
number of leaves to a possible next iteration is always maintained.
\qed
%% end of example
\end{example}

\section{Experiments}\label{sec:experiments}

In order to measure the performance of LP-relaxation based heuristic
BOA in Algorithm~\ref{alg:bottom-up}, several experiments have been
performed for varying problem parameters. The parameters employed,
and their values are as follows:
\begin{enumerate}
\item{\textit{\#Nodes}: It represents the number of nodes in the tree in a given $MWTM$
instance. In order to generate a variety of tree sizes, the following values are employed in
the experiments: $16$ (\textit{small tree}), $32$, $64$, and $128$ (\textit{large tree}).}
\item{\textit{Average Degree}: This parameter is defined to be the average degree of
a node in the tree in a given instance of $MWTM$. It is tuned throughout the experiments
to control the type of trees generated in a scale ranging from \textit{deep} to \textit{shallow}
for fixed values of \textit{\#Nodes} parameter.
The values used in the experiments are $1.5$ (deep tree), $2.0$, and $2.5$ (shallow tree).}
\item{${}^{\#Tasks}/{}_{\#Nodes}$: It is defined to be the ratio of the number of the tasks
to the number of the nodes in the tree associated with a given $MWTM$ instance. This parameter
is used to generate a range of $MWTM$ instances changing from those with a very few tasks
called \textit{sparse} to those with a large number of tasks called \textit{dense} in proportion
to the tree size. The values used are $0.125$ (sparse), $0.25$, and $0.5$ (dense). As this
ratio increases, the flexibility to use non-leaf nodes for assignments decreases.}
\item{\textit{Weight Distribution}: The weight of assigning a node to a task has a value
chosen from the range $[1..\frac{\#Nodes}{2}]$. The following $3$ weight distributions
are used: i) the weights are increasing from the root to the leaves, ii) the weights are
decreasing from the root to the leaves, and iii) the weights are assigned randomly without
regard to the respective depths of the nodes.}
\end{enumerate}

For each combination of these four parameters, a total of $4*3*3*3 = 108$ different
test cases are formed. For each test case, $20$ instances of the problem are then randomly
generated, and their averages are taken in the experiments. We record the total number
of LP calls made at line~\ref{boa:LP} in BOA for every instance.
%% and average it for the $20$ randomly generated instances to
%% obtain the average number of LP invocations per test case.
Corresponding to each instance, both the execution time and the solution obtained
are also recorded once for the corresponding ILP formulation which gives the optimal
solution, and once for BOA expected to return a suboptimal solution.

All the tests were run on a machine with a 4 GB of RAM and an
Intel Core 2 Duo T9550 2.66 Ghz mobile processor. Microsoft Solver Foundation 3.0 was
employed as LP/ILP solver library, and the code was developed in $C\# 5.0$.

The results of the experiments are presented through a series of seven
tables in this section. These tables all share a common structure. As the
topmost two rows are used to set the values for the parameters
\textit{Average Degree} and ${}^{\#Tasks}/{}_{\#Nodes}$,
the leftmost two columns display the values for the parameters
\textit{Weight Distribution} and \textit{\#Nodes}. The last six tables,
on the other hand, can be logically grouped into three each with two
tables. While the first table in a group presents a comparison between
the execution times of ILP and BOA, the second evaluates the quality
of the solutions by BOA against the optimal. These three groups correspond
to the three distinct values that the \textit{Average Degree} parameter can
take on, namely $2.5$, $2.0$, $1.5$, and are presented in this order.
Of the four parameters only one, namely the \textit{Average Degree},
is fixed, and the average results are given for all combinations of the other
three parameters in these groups of tables. Finally, an additional row
labeled \textit{Method} is inserted as the third from the top to allow us
to specify either ILP or BOA in these tables. It should be noted that the
cells at the same position in both tables in the same group correspond
to the exact same combination of parameter values.

The colors yellow and green are used consistently to highlight the cells containing \textit{NaN}
and $\infty$ respectively in all the tables. The cells in yellow marked with \textit{NaN} in a table
mean that there exists no feasible solution. For some combinations of parameters no feasible
solution was possible. Especially when the instances get dense, and the trees associated with
them become deep, as would be expected, it becomes more difficult to find a feasible solution
satisfying the hierarchy constraint. Such configurations are characterized with high
${}^{\#Tasks}/{}_{\#Nodes}$ values, and with the low values of the \textit{Average Degree}
parameter. The results in the tables to follow confirm this expectation. All such cases leading to
infeasibility are shown in yellow. Moreover, when the weight distribution is such that it is
decreasing from the root to the leaves, finding an optimal solution becomes even more difficult
using ILP. Under these circumstances, the execution time for ILP grows very quickly after the
number of nodes become larger than $16$. We do not include these extremely large execution
times in the tables, and indeed we have canceled those solutions without finding the optimal
values. All such cells are displayed in green marked with an $\infty$ symbol. The existence of
feasible solutions by BOA in the corresponding cells, on the other hand, is an evidence for the
existence of the optimal solutions for those cases as well. In order to verify, therefore, the
quality of a solution by BOA in these situations, we make use of the corresponding possibly
fractional LP relaxation solution as a potential upper bound. A quick inspection of the relevant
cells reveals that the difference is very small even in these cases which definitely guarantees
an even smaller distance to the actual optimal. It is hence suspected that BOA might even
have achieved it.

\begin{figure}[htb]
	\centering
	\includegraphics[width=4.2in]{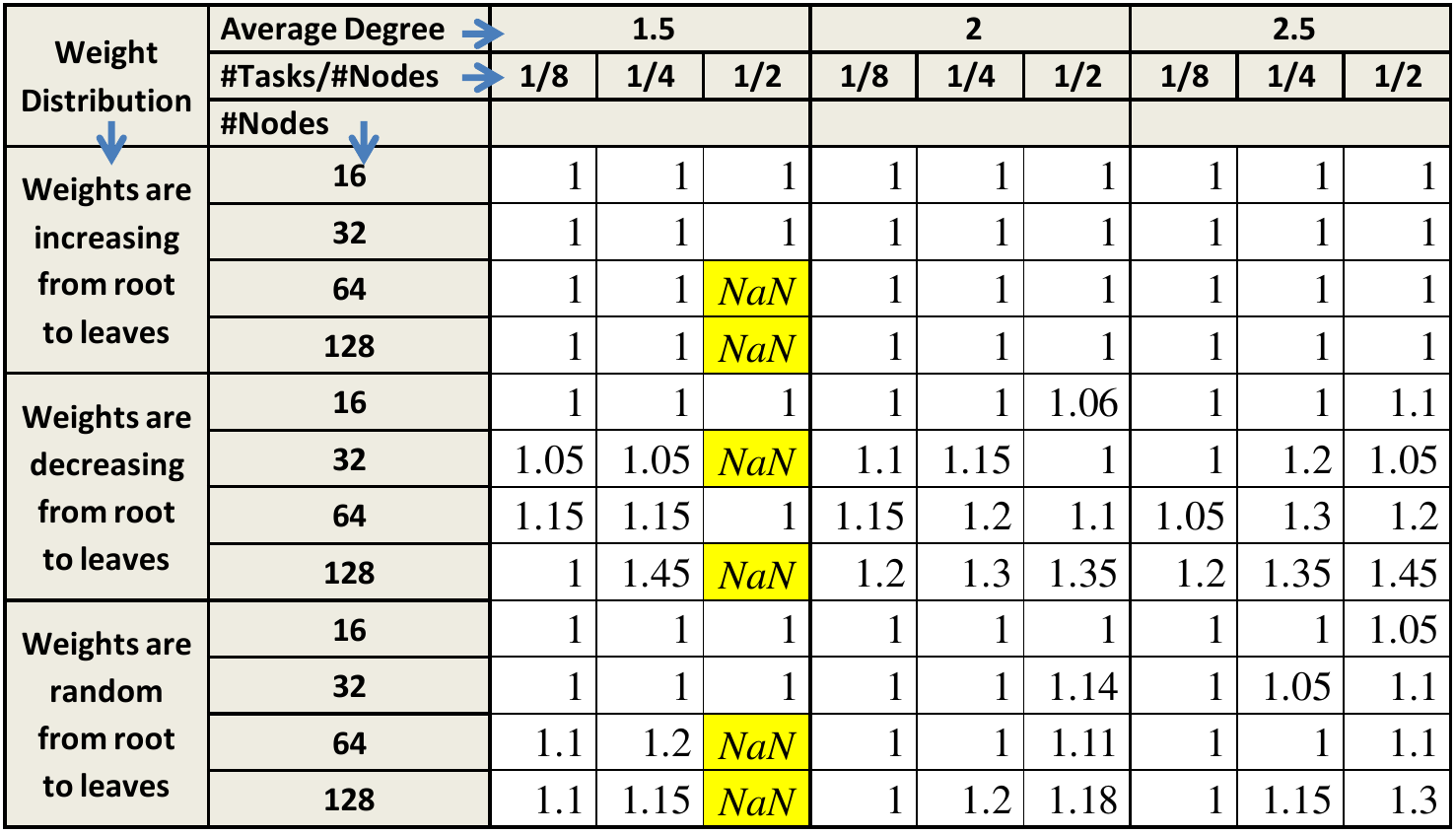}
	\caption{The average number of times LP is called at line~\ref{boa:LP} in BOA
in Algorithm~\ref{alg:bottom-up} for all test cases. The cells in yellow are marked with the
symbol \textit{NaN} to mean that there exists no feasible solution.}
	\label{fig:expNoOfLPCalls}
\end{figure}

The table in Figure~\ref{fig:expNoOfLPCalls} displays the average number of LP
invocations performed at line~\ref{boa:LP} in BOA in Algorithm~\ref{alg:bottom-up}
for each of $108$ different test cases. As the table clearly reflects, the number of
times the call to the corresponding LP relaxation gets executed is very close to $1$.
The cells marked with \textit{NaN} all correspond to the test cases for which no feasible
solutions exist as explained above.

The two tables in Figure~\ref{fig:expTime_2.5} and Figure~\ref{fig:expGoal_2.5}
display the execution times, and the solutions respectively when the parameter
representing the average degree of a node in the tree is set to $2.5$ which
corresponds to shallow trees. There are only $3$ out of $36$ test cases where
BOA is slightly slower in Figure~\ref{fig:expTime_2.5}. These correspond to the
test cases where:
i) ${}^{\#Tasks}/{}_{\#Nodes}$ {\small = 1/8}, {\small \textit{Weight Distribution} = random},
{\small \textit{\#Nodes} = 128},
ii) ${}^{\#Tasks}/{}_{\#Nodes}$ {\small = 1/4}, {\small \textit{Weight Distribution} = increasing},
{\small \textit{\#Nodes} = 64}, and
iii) ${}^{\#Tasks}/{}_{\#Nodes}$ {\small = 1/4}, {\small \textit{Weight Distribution} = random},
{\small \textit{\#Nodes} = 64}.
BOA, on the other hand, achieves optimal or almost optimal solutions as seen in
Figure~\ref{fig:expGoal_2.5} for these test cases. Also an examination of the
cells corresponding to these test cases in the table in Figure~\ref{fig:expNoOfLPCalls}
reveals that they all have the value one.

\begin{figure}[htb]
	\centering
	\includegraphics[width=4.4in]{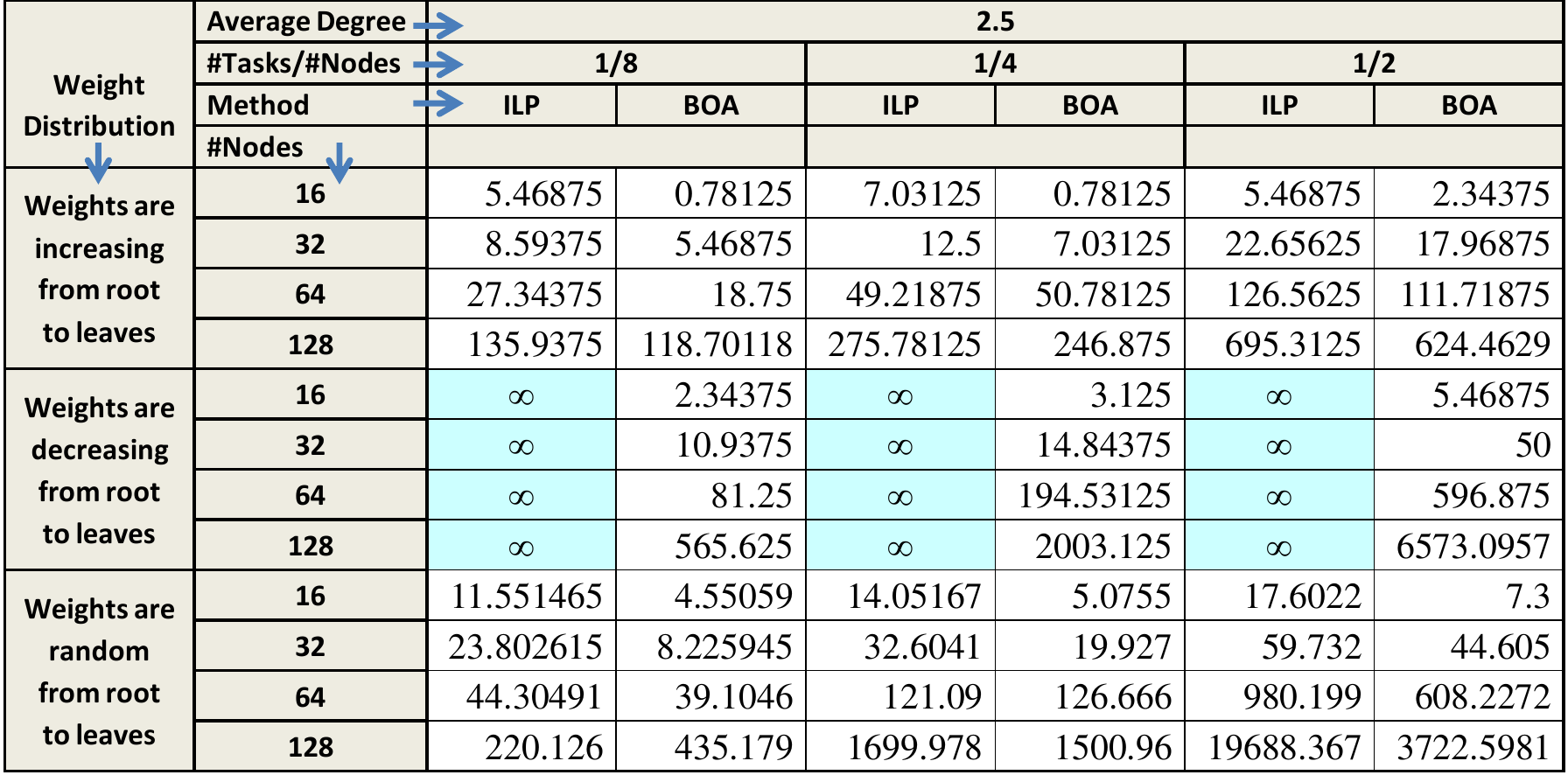}
	\caption{The execution times when the average degree of a tree node
parameter is set to $2.5$ corresponding to shallow trees. The symbol $\infty$
in a blue cell indicates a very large value.}
	\label{fig:expTime_2.5}
\end{figure}

\begin{figure}[htb]
	\centering
	\includegraphics[width=4.4in]{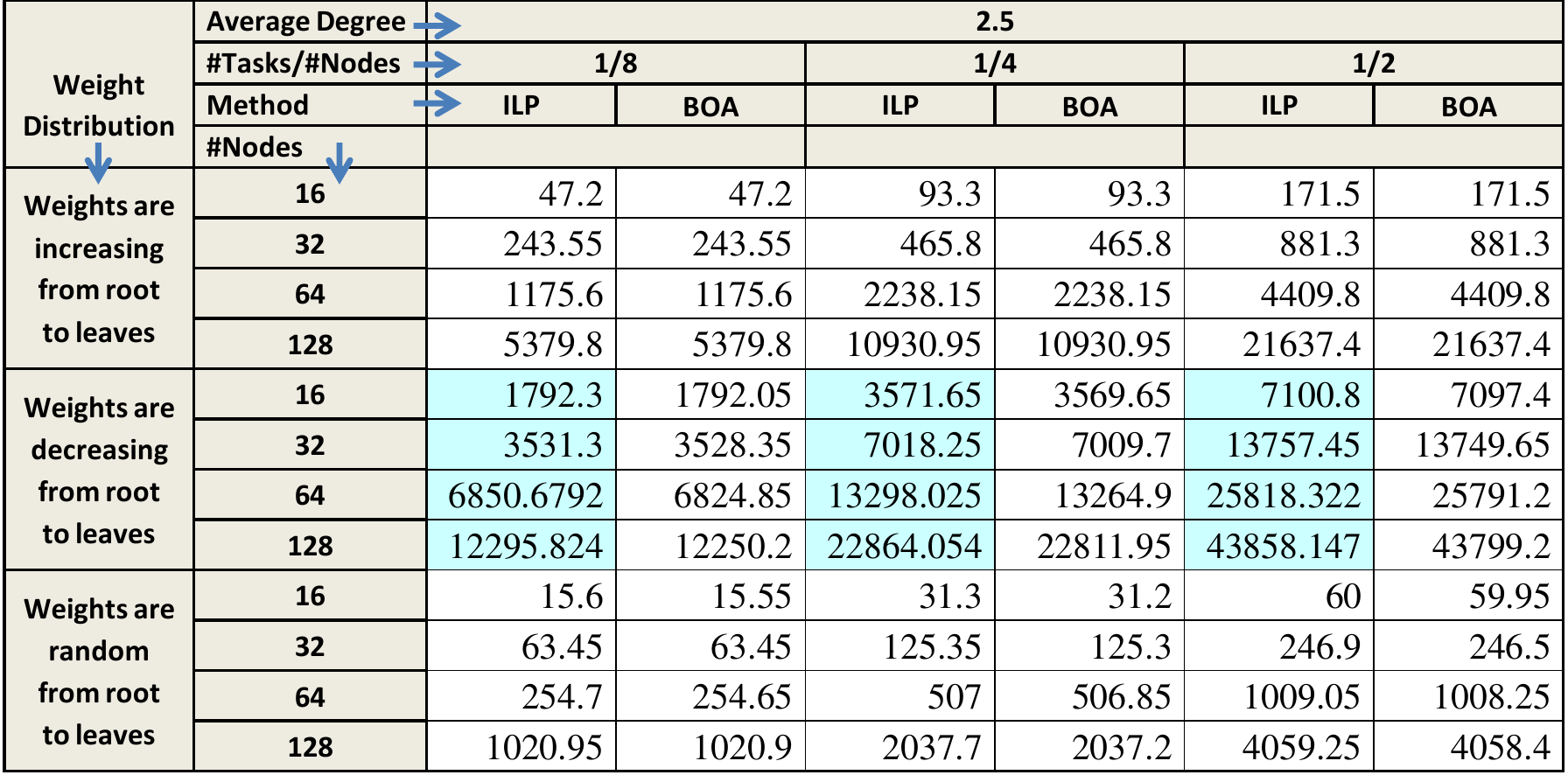}
	\caption{The solutions obtained when the average degree of a tree node
parameter is set to $2.5$. The values in the blue cells are the estimated upper bounds
obtained by the corresponding possibly fractional LP relaxation solutions.}
	\label{fig:expGoal_2.5}
\end{figure}

These execution time anomalies observed to
occur when BOA finds an almost optimal solution in only one iteration can therefore
be explained by the overhead introduced by BOA. When BOA obtains an almost
optimal solution with a single LP call, it would be natural to also expect ILP itself
to discover the optimal integer assignments quickly. As BOA has some additional
computations, its running time for such cases would be slightly more than
that of ILP.

Even when it takes forever to compute the optimal by ILP, the values in the
corresponding blue cells in Figure~\ref{fig:expTime_2.5} are all available for
BOA as an indication of its running time performance. In terms of solution
quality, BOA always achieves optimal solutions when \textit{Weight Distribution}
is such that it is increasing from the root to the leaves. Otherwise, the solutions
obtained as shown in Figure~\ref{fig:expGoal_2.5} are so close to the corresponding
optimal values that it is easily seen to perform within $1\%$ of even the upper bounds
obtained via the corresponding LP relaxation solution.

\begin{figure}[htb]
	\centering
	\includegraphics[width=4.4in]{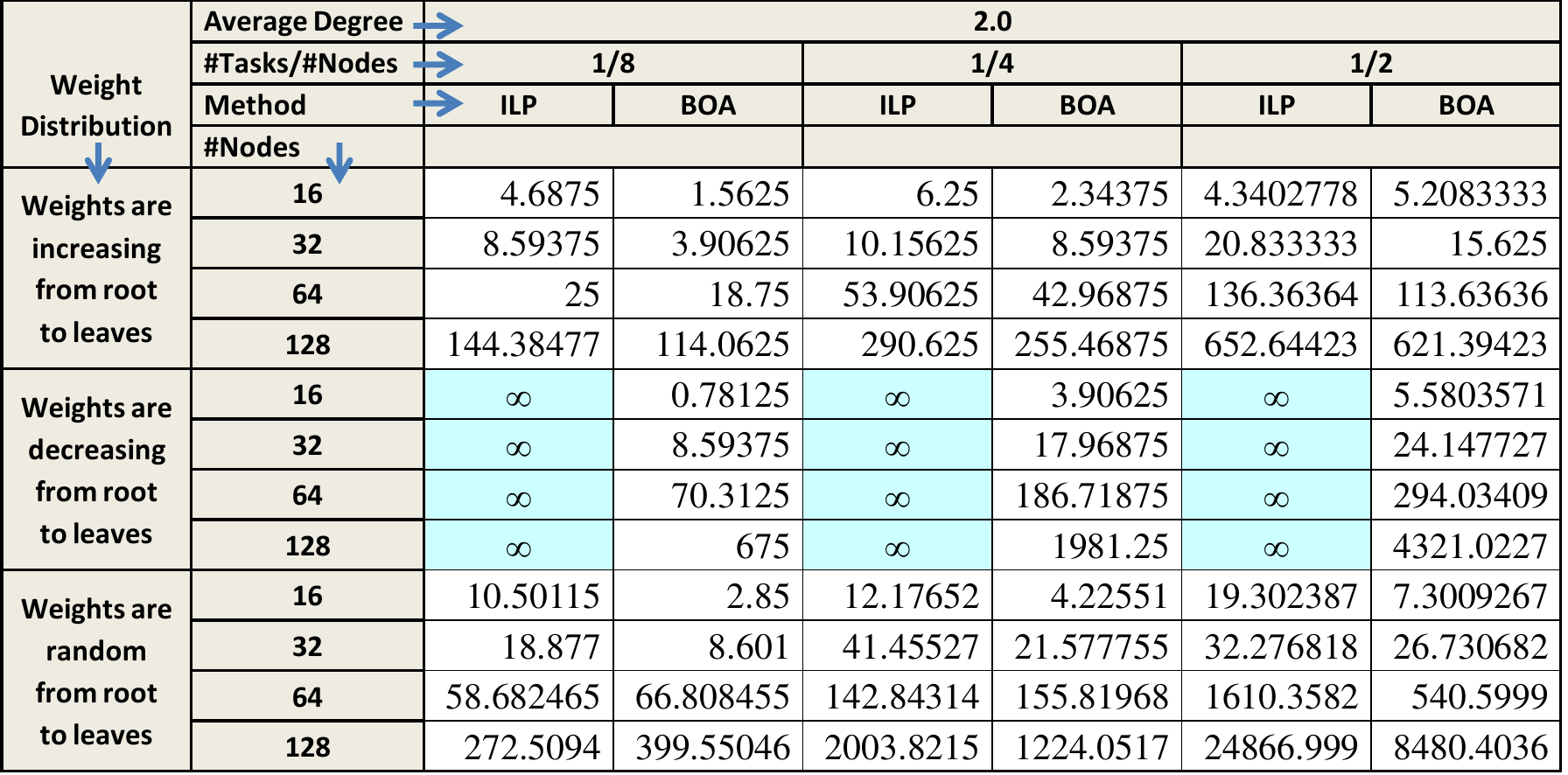}
	\caption{The execution times when the average degree of a tree node
parameter is set to $2.0$. The symbol $\infty$ in a blue cell indicates a very large value.}
	\label{fig:expTime_2.0}
\end{figure}

\begin{figure}[htb]
	\centering
	\includegraphics[width=4.4in]{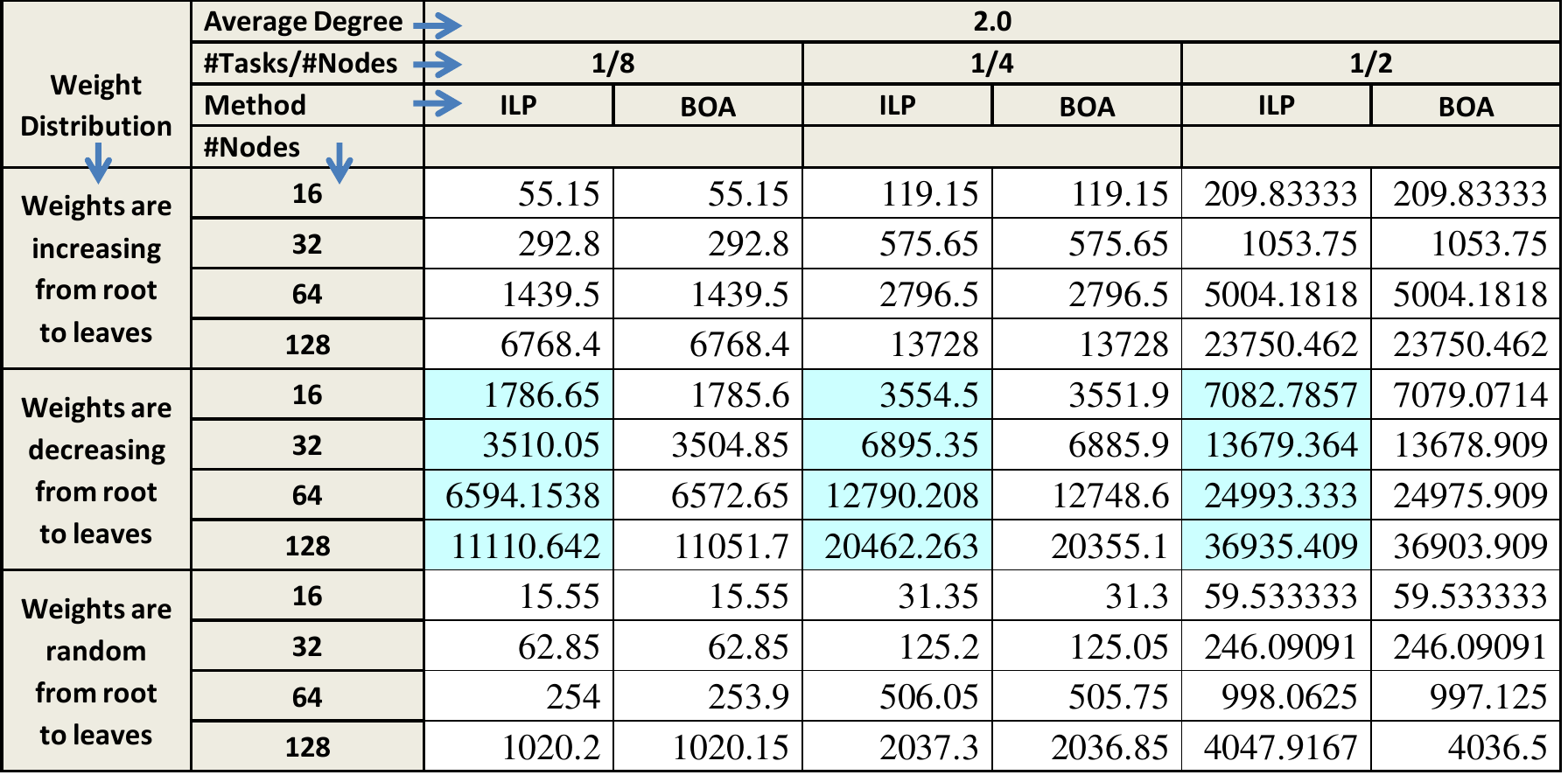}
	\caption{The solutions obtained when the average degree of a tree node
parameter is set to $2.0$. The values in the blue cells are the estimated upper bounds
obtained by the corresponding possibly fractional LP relaxation solutions.}
	\label{fig:expGoal_2.0}
\end{figure}

The tables in Figure~\ref{fig:expTime_2.0} and Figure~\ref{fig:expGoal_2.0} display
the execution times, and the solutions respectively when the \textit{Average Degree}
parameter is set to $2.0$. There are this time $4$ out of $36$ test cases where BOA
turns out to be slower than the ILP solver library, and these correspond to the cells in
Figure~\ref{fig:expTime_2.0} characterized by:
i) ${}^{\#Tasks}/{}_{\#Nodes}$ {\small = 1/8}, {\small \textit{Weight Distribution} = random},
{\small \textit{\#Nodes} = 64},
ii) ${}^{\#Tasks}/{}_{\#Nodes}$ {\small = 1/8}, {\small \textit{Weight Distribution} = random},
{\small \textit{\#Nodes} = 128},
iii) ${}^{\#Tasks}/{}_{\#Nodes}$ {\small = 1/4}, {\small \textit{Weight Distribution} = random},
{\small \textit{\#Nodes} = 64}, and
iv) ${}^{\#Tasks}/{}_{\#Nodes}$ {\small = 1/2}, {\small \textit{Weight Distribution} = increasing},
{\small \textit{\#Nodes} = 16}.
An inspection of the respective cells corresponding to these test cases in both
Figure~\ref{fig:expNoOfLPCalls} and Figure~\ref{fig:expGoal_2.0} confirms
once more that BOA finds solutions with optimal or almost optimal values in
exactly one iteration making a single LP call. As a result, the previous analysis
stating that ILP performs very fast for the instances whose LP formulations
also return integer assignments still holds.

BOA always achieves optimal or very close to optimal solutions as shown in Figure~\ref{fig:expGoal_2.0}.
For example, when ${}^{\#Tasks}/{}_{\#Nodes}=1/2$ for a $128$-node tree, and the weights
are randomly distributed among all nodes, the ILP produces the optimal goal value as $4047.9167$
and BOA heuristic generates $4036.5$. This is one of the cases with the largest difference between
the optimal solution and our heuristic solution. Even in this case, the difference between the two solutions
is much less than $1\%$. For some cases where we have used LP relaxation solutions as upper bounds
instead of ILP, the differences are slightly higher. For example, when ${}^{\#Tasks}/{}_{\#Nodes}=1/4$
for a $128$-node tree, and the weights are decreasing from the root to the leaves, the upper bound to
the optimal is $20462.263$, and BOA achieves $20355.1$. Even for this upper bound the difference is
very small. Potentially, BOA might even have the same solution as the actual optimal, or else would have
definitely achieved a closer value to the actual optimal.

\begin{figure}[htb]
	\centering
	\includegraphics[width=4.4in]{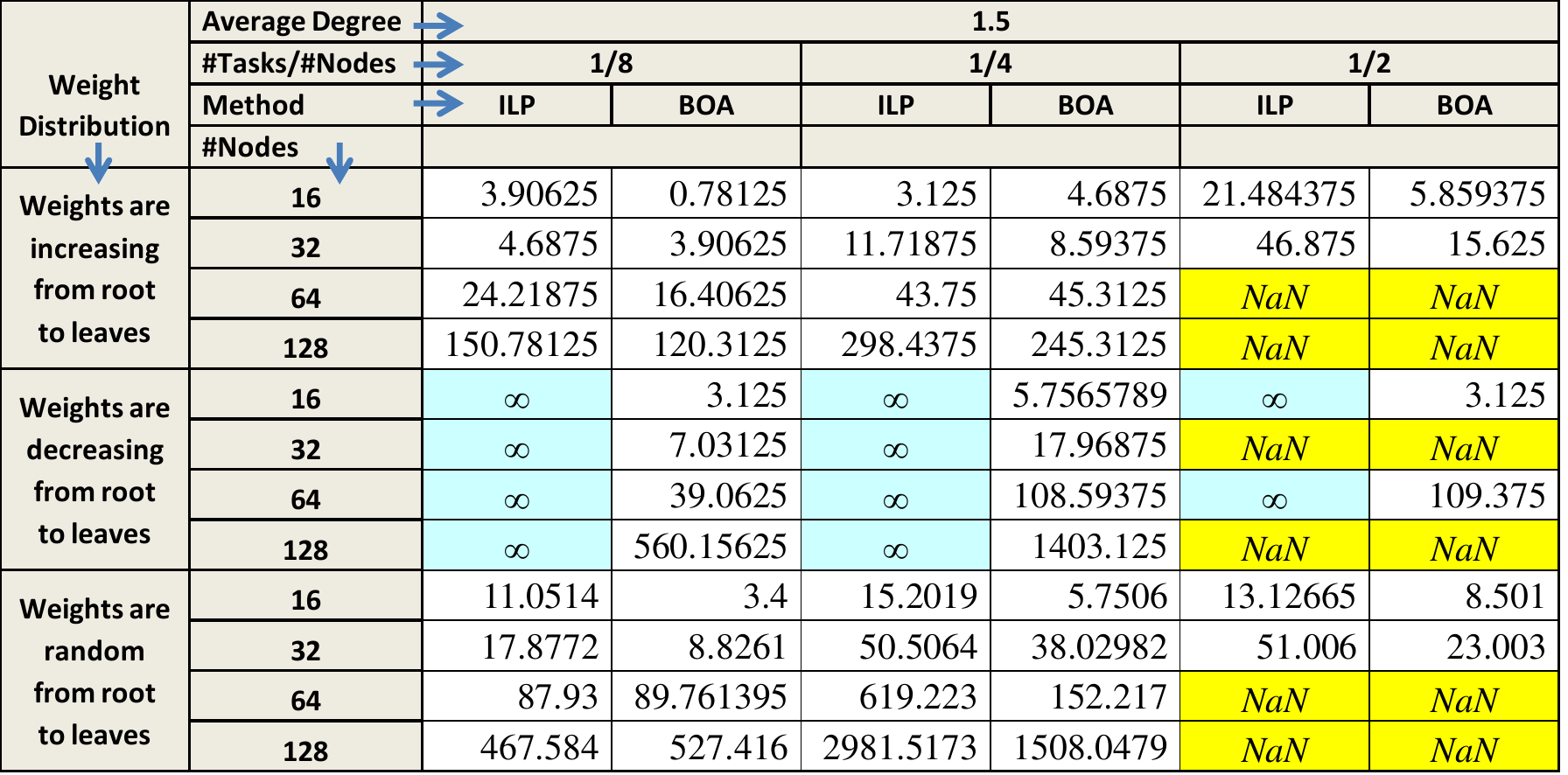}
	\caption{The execution times when the average degree of a tree node
parameter is set to $1.5$ corresponding to deep trees. While the cells in yellow
marked with the symbol \textit{NaN} represent the parameter combinations for
which there are no feasible solutions, the symbol $\infty$ in a blue cell indicates
a very large value.}
	\label{fig:expTime_1.5}
\end{figure}

Figure~\ref{fig:expTime_1.5} and Figure~\ref{fig:expGoal_1.5} display the execution
times, and the solutions respectively when the parameter representing the average
degree of a tree node is set to $1.5$ which corresponds to deep trees. In $4$ out of
the $36$ test cases presented in Figure~\ref{fig:expTime_1.5}, BOA executes longer
in figuring out a solution. The first two of these correspond to the cases where the
parameter \textit{\#Nodes} is set to either $64$ or $128$ when
${}^{\#Tasks}/{}_{\#Nodes}=1/8$ and \textit{Weight Distribution} is random.
The cells corresponding to these two test cases in Figure~\ref{fig:expNoOfLPCalls}
have both the value $1.1$. Furthermore, it is seen from the corresponding cells in
Figure~\ref{fig:expGoal_1.5} that BOA finds solutions very close to optimal.
The last two test cases correspond, however, to the combinations of parameters when
\textit{\#Nodes} is set to either $16$ or $64$ when
${}^{\#Tasks}/{}_{\#Nodes}=1/4$ and \textit{Weight Distribution} is such that
it is increasing from the root to the leaves. A quick inspection of the corresponding
cells for the last two test cases in the corresponding tables reveals that BOA found
the optimal solutions after a single LP invocation. So the prior justification is still valid.

\begin{figure}[htb]
	\centering
	\includegraphics[width=4.4in]{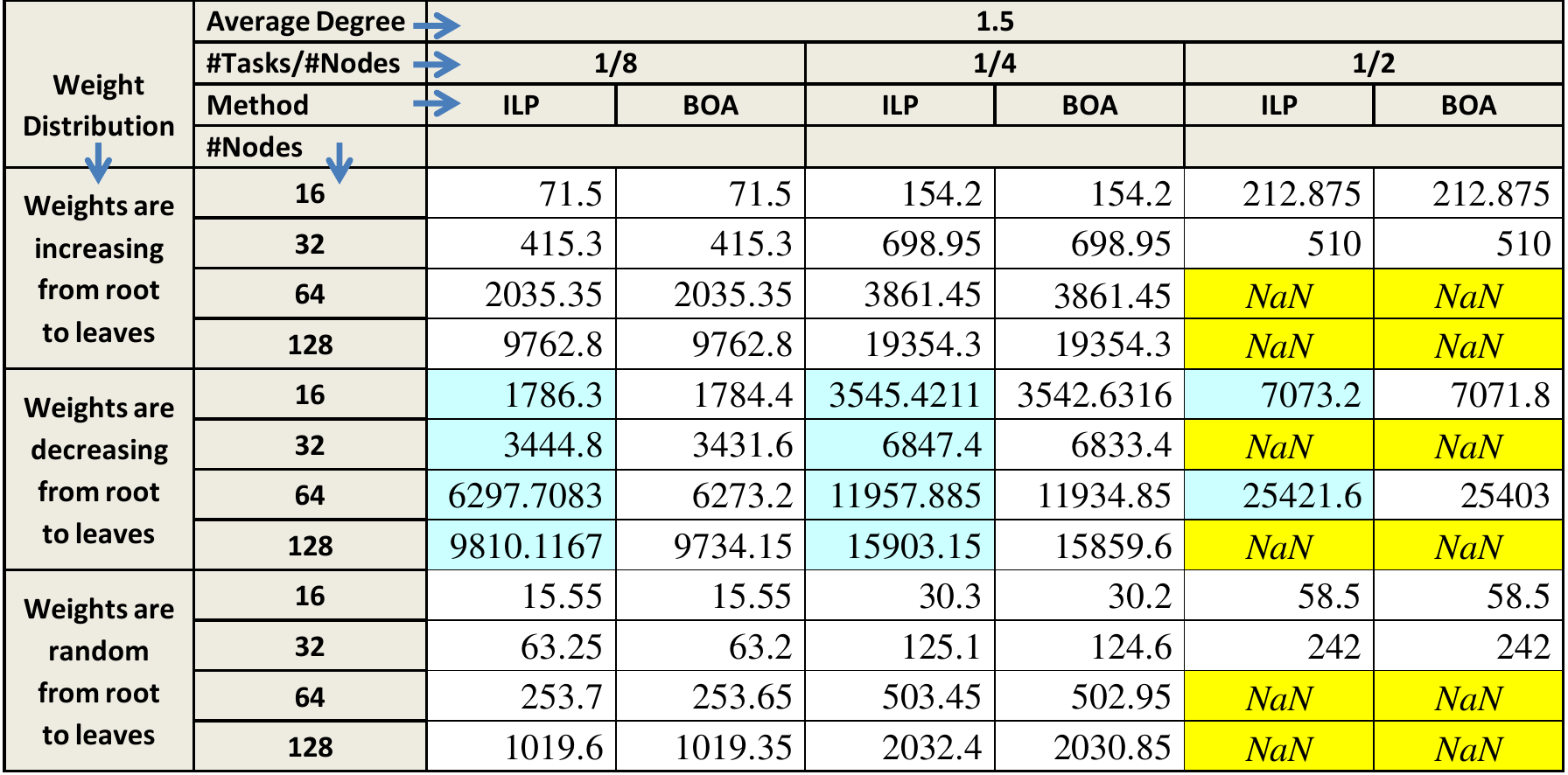}
	\caption{The solutions obtained when the average degree of a tree node
parameter is set to $1.5$. While the cells in yellow marked with the symbol \textit{NaN}
represent the parameter combinations for which there are no feasible solutions, the
values in the blue cells are the estimated upper bounds obtained by the corresponding
possibly fractional LP relaxation solutions.}
	\label{fig:expGoal_1.5}
\end{figure}

The results of the experiments show that for all cases BOA generates
goal values very close to the optimal obtained by ILP. The results are
either exactly the same, or there is a very small difference. Besides, in
the latter case, the distance to the optimal is always much less than $1\%$.
Moreover with \textit{Weight Distribution} increasing from the root to the leaves,
BOA always finds optimal solutions.

When the parameter \textit{Weight Distribution} is such that it decreasing
from the root to the leaves, it takes forever to compute the optimal by ILP
as shown by the corresponding cells marked $\infty$ throughout the tables.
Under the same setting, BOA, on the other hand, returns in polynomial time
almost optimal solutions that are within $1\%$ of even the upper bounds
obtained via the corresponding LP relaxation solution.

In only $11$ out of a total of $108$ different test cases, ILP runs faster
than ILP. All $11$ of these execution time anomalies are seen occur when
BOA discovers an almost optimal solution after at most $1$ or $1.1$ LP calls
on the average. These test cases are therefore thought to correspond most
probably to the instances that can be solved efficiently by ILP. In such a case
ILP can essentially find a solution by making only a very few LP relaxation calls
via a branch and bound algorithm. It is then easily anticipated that the additional
overhead posed by BOA leaves it behind ILP.

\section{Conclusion}

In this paper we have introduced a new version of the assignment problem,
called as $MWTM$ problem. In $MWTM$, as is the case with the standard
assignment problem, a one-to-one assignment is sought between a set of
tasks and a set of agents (nodes) to maximize the total profit (weight) value.
Moreover, there is an additional constraint in $MWTM$ preventing some
combinations of the assignments. Since agents are organized in a tree
structure representing hierarchical (agent - sub-agent) relationships, when
an agent is assigned to a task, none of its sub-agents or super-agents can
be assigned to any other task. This problem is shown to be NP-hard.
Therefore, we proposed an iterative LP-relaxation solution to it. Through
experiments we have shown that our heuristic solution is very effective,
and produces either the optimal solution, or a solution very close to the
optimal in a very reasonable time performing only a few iterations.
In most cases the solution is achieved within a single iteration.

\bibliographystyle{elsarticle-num}
\bibliography{mwtm}

\end{document}